\documentclass[sigconf, nonacm]{acmart}

\usepackage{amsmath, amsfonts}
\usepackage{amssymb}
\usepackage{graphicx}
\usepackage{textcomp}
\usepackage[dvipsnames]{xcolor}
\usepackage{subcaption}
\usepackage{xspace}
\usepackage{tikz} 
\usepackage{bbding} 
\usepackage{diagbox} 
\usepackage[normalem]{ulem}

\usepackage{pifont}

\usepackage{algorithm}
\usepackage{algpseudocodex}
\usepackage{algorithmicx}

\usepackage{environ}
\usepackage{adjustbox}

\usepackage{pgfplots}
\usetikzlibrary{patterns}
\usepgfplotslibrary{groupplots}

\usepackage{todonotes}

\usepackage{enumitem}

\usepackage[most]{tcolorbox}

\usepackage{wrapfig}

\usepackage{amsthm}  

\usetikzlibrary{spy}

\usepackage{tabularx}
\usepackage{soul}

\usepackage{multirow}

\usepackage{capt-of}

\usepackage{mathpartir}
\usepackage{mathtools}

\usepackage{marginnote}

\newcommand{\green}[1]{\textcolor{green}{#1}}

\newcommand{\brown}[1]{\textcolor{brown}{#1}}
\newcommand{\orange}[1]{\textcolor{orange}{#1}}
\newcommand{\purple}[1]{\textcolor{purple}{#1}}
\newcommand{\violet}[1]{\textcolor{violet}{#1}}

\newcommand{\set}[1]{\{#1\}}

\newcommand{\dom}{\textsf{dom}}

\newcommand{\bfit}[1]{\emph{#1}}

\renewcommand{\O}{\mathit{O}}
\renewcommand{\H}{\mathcal{H}}
\newcommand{\h}{\mathit{h}}

\newcommand{\Key}{\mathsf{K}}
\newcommand{\Val}{\mathsf{V}}
\newcommand{\init}{\mathsf{init}}
\newcommand{\Op}{\mathsf{Op}}

\newcommand{\OpId}{\mathsf{OpId}}
\newcommand{\opid}{\iota}
\newcommand\R{\mathsf{R}}
\newcommand\W{\mathsf{W}}
\newcommand{\WriteTx}{\mathsf{WriteTx}}

\renewcommand\AE{\mathcal{X}}

\newcommand{\aecav}{\xi}
\newcommand{\rel}[1]{\xrightarrow{#1}}

\newcommand{\comp}{\;;\;}
\newcommand{\po}{\textup{\textsf{\purple{po}}}}
\newcommand{\SO}{\textup{\textsf{\purple{SO}}}}
\newcommand{\VIS}{\textup{\textsf{\brown{VIS}}}}

\newcommand{\nVIS}{\overline{\VIS}}
\newcommand{\AR}{\textup{\textsf{\orange{AR}}}}
\newcommand{\CO}{\textup{\textsf{\orange{CO}}}}

\newcommand{\Int}{\textup{\textsc{\violet{Int}}}}
\newcommand{\Ext}{\textup{\textsc{\violet{Ext}}}}
\newcommand{\Session}{\textup{\textsc{\violet{Session}}}}
\newcommand{\TransVis}{\textup{\textsc{\violet{TransVis}}}}
\newcommand{\Prefixaxiom}{\textup{\textsc{\violet{Prefix}}}}
\newcommand{\TotalVis}{\textup{\textsc{\violet{TotalVis}}}}
\newcommand{\wwconflict}{\bowtie}
\newcommand{\rwconflict}{\triangleleft}
\newcommand{\NoConflict}{\textup{\textsc{\violet{NoConflict}}}}

\newcommand{\il}{\ell}

\newcommand{\IL}{\mathcal{L}}

\newcommand{\RA}{\textup{\textsf{RA}}}
\newcommand{\CC}{\textup{\textsf{CC}}}

\newcommand{\PC}{\textup{\textsf{PC}}}
\newcommand{\PSI}{\textup{\textsf{PSI}}}
\newcommand{\SI}{\textup{\textsf{SI}}}

\newcommand{\SER}{\textup{\textsf{SER}}}

\newcommand{\allLevelsSet}{\set{\RA,\CC,\PC,\PSI,\SI,\SER}}

\newcommand{\ReadAtomic}{\textup{\textsf{ReadAtomic}}}
\newcommand{\Causal}{\textup{\textsf{Causal}}}
\newcommand{\Prefix}{\textup{\textsf{Prefix}}}
\newcommand{\Conflict}{\textup{\textsf{Conflict}}}
\newcommand{\SnapshotIsolation}{\textup{\textsf{SnapshotIsolation}}}
\newcommand{\Serializability}{\textup{\textsf{Serializability}}}

\newcommand{\T}{\mathcal{T}}

\newcommand{\WR}{\green{\textup{\textsf{WR}}}}

\newcommand{\hStatex}[0]{\vspace{4pt}}

\newcommand{\store}{\textsf{store}}
\newcommand{\buffer}{\mathit{buffer}}
\newcommand{\level}{\mathsf{level}}
\newcommand{\StartProc}{\textsc{Start}}

\newcommand{\WriteProc}{\textsc{Write}}
\newcommand{\ReadProc}{\textsc{Read}}
\newcommand{\CommitProc}{\textsc{Commit}}

\newcommand{\sts}{\mathit{sts}}
\newcommand{\cts}{\mathit{cts}}
\newcommand{\xlock}{\mathit{xlock}}
\newcommand{\slock}{\mathit{slock}}
\newcommand{\releaseLocks}{\mathit{unlocks}}

\newcommand{\committed}{\textsf{committed}}
\newcommand{\aborted}{\textsf{aborted}}

\newcommand{\OneWrite}{\textsf{OneWrite}}
\newcommand{\RYW}{\textsf{RYW}}
\newcommand{\InitTran}{\textsf{InitTran}}

\newcommand{\code}[2]{line~#1:#2}

\newcommand{\TS}{\textsf{T}}

\newcommand{\now}{\textsf{now()}}
\newcommand{\partialmapsto}{\rightharpoonup}

\algdef{SE}[DOWHILE]{Indent}{EndIndent}{}{}

\newcommand{\case}{\textsc{Case}}
\newcommand{\casei}{\textsc{Case I}}
\newcommand{\caseii}{\textsc{Case II}}
\newcommand{\caseiii}{\textsc{Case III}}

\newcommand{\inlsec}[1]{\smallskip\noindent\textbf{#1.}}

\newcommand{\ourframework}{\textsc{MixIso}\xspace}

\begin{document}

\title{Semantic Conformance of Concurrency Control Protocols under Mixed Isolation Levels}
\titlenote{This work was accepted by and will be presented at 1st Symposium on Consistency Checking Principles (SCCP 2026).}

\author{Qiuhuan Xiong}
\affiliation{%
  \institution{Nanjing University}
  \country{}
}
\email{qiuhuanxiong@smail.nju.edu.cn}

\author{Hengfeng Wei}
\affiliation{%
  \institution{Hunan University}
  \country{}
}
\email{hfwei@hnu.edu.cn}

\author{Si Liu}
\affiliation{%
  \institution{Texas A\&M University}
  \country{}
}
\email{si.liu@tamu.edu}

\author{Yuxing Chen}
\affiliation{%
  \institution{Renmin University of China}
  \country{}
}
\email{axinggu@gmail.com}

\author{Jidong Ge}
\affiliation{%
  \institution{Nanjing University}
  \country{}
}
\email{gjd@nju.edu.cn}

\begin{abstract}

Modern database systems widely support per-transaction isolation levels
as a practical means of balancing consistency guarantees and performance.
Yet, it remains largely unclear whether their concurrency control protocols  correctly enforce the intended isolation guarantees under such mixed-isolation settings.
In this paper, we address this semantic conformance question by
developing \ourframework, a formal semantic framework for mixed isolation levels.
We demonstrate  its applicability  
by establishing the semantic conformance of two concurrency control protocols,
 one combining two isolation levels and the other three.


\end{abstract}

\maketitle

\section{Introduction }
\label{sec-intro}

 Strong database isolation guarantees, such as serializability (\SER)~\cite{SER:JACM1979}, 
often come at the cost of increased concurrency control overheads.
To mitigate this tension,
production database systems,
including Oracle, MySQL, PostgreSQL, CockroachDB, and TiDB, 
increasingly support assigning isolation levels on a 
\emph{per-transaction} basis.
This allows applications to execute certain transactions under weaker isolation levels
like \emph{snapshot isolation} (\SI)~\cite{Critique:SIGMOD1995},
instead of enforcing \SER{} system-wide,
thereby achieving performance gains.

However, this raises a natural question:
\emph{when transactions execute under such mixed isolation settings,
can the underlying concurrency control mechanisms correctly enforce the desired isolation guarantees?} 
This question remains largely unanswered despite
prior efforts~\cite{DBLP:journals/fac/LiuOWGM19,10.1007/978-3-031-90660-2_3,10.1145/3494517,10.1145/3269981,VerIso:VLDB2025,ConsMaude:TACAS2019,10.14778/3750601.3750626,10.1145/3747515}
on verifying isolation guarantees for concurrency control protocols, which focus on homogeneous isolation settings.

The challenges are mainly twofold:

\begin{enumerate}[label=(\arabic*), leftmargin=20pt]
    \item
    The semantics of individual isolation guarantees and their variants are already subtle~\cite{DBLP:conf/podc/CrooksPAC17,noc-noc};
    mixing them introduces significantly greater semantic complexity.
    In fact, despite widespread support for per-transaction isolation, existing database documentation~\cite{CockroachDBTxnIsoLevel,YugabyteDBTxnIsoLevel,TiDBTxnIsoLevel,SQLServerTxnIsoLevel,OracleTxnIsoLevel,MySQLTxnIsoLevel,PostgreSQLTxnIsoLevel} typically leaves the semantics of mixing isolation levels underspecified or implementation-specific.

    \item Answering this question demands a mathematically rigorous and systematic framework, 
    ideally along with tool support,
    for formally verifying the semantic conformance of concurrency control protocols across a wide range of isolation level combinations.
\end{enumerate}

\inlsec{\ourframework}
We address these challenges
by developing 
 \ourframework,
 a formal semantic framework for mixed isolation levels.
It accommodates
both traditional  levels, such as \SER{}  and \SI, 
and a range of emerging ones, including
\emph{read atomicity} (RA)~\cite{RAMP:TODS2016},
\emph{transactional causal consistency} (CC)~\cite{Cure:ICDCS2016},
\emph{prefix consistency} (\PC)~\cite{GSP:ECOOP2015},
and
\emph{parallel snapshot isolation} (PSI)~\cite{PSI:SOSP2011}.

Underlying \ourframework is a semantic characterization of mixed isolation levels based on
\emph{per-transaction visibility}:
the effects of other transactions that a given transaction is allowed to observe in a database execution.
The overall execution is then considered 
\emph{consistent}
 if, for each transaction, the effects it observes conform to its assigned  level.
This per-transaction interpretation of visibility enables us to faithfully capture subtle semantic differences between isolation levels, 
 while providing the flexibility to
   specify mixed isolation
    guarantees that reflect  particular design choices or implementations.

\ourframework builds on the axiomatic framework in~\cite{Framework:CONCUR2015},
which provides a declarative basis for specifying isolation levels while abstracting away implementation details.
Yet, extending its axioms to mixed isolation settings is non-trivial.

A key challenge lies in
employing a shared \emph{global} visibility relation
across transactions running at different isolation levels
while preserving \emph{isolation autonomy}, 
i.e., a transaction's isolation guarantee depends only on its own assigned level. 
Otherwise, a transaction with a stronger  level could implicitly impose its visibility requirements on other transactions.

%

\begin{figure}[h!]
	\centering
	\includegraphics[width=.75\columnwidth]{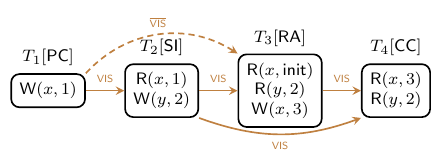}
	\captionsetup{skip=0pt}
	\caption{
		An illustration of isolation autonomy for $T_3$.
	}
	\label{fig:intro-example}
\end{figure}
For instance, as shown in  Figure~\ref{fig:intro-example},
 transaction $T_4$, which is assigned CC,
requires its visible set to be transitively closed: 
since $T_4$ sees\footnote{``See’’ is indicated by the \VIS{} relation between transactions; see Section~\ref{sec-mil}.} $T_3$ 
(by reading the value 3 of key $x$ written by $T_3$) 
and $T_3$ sees $T_2$,
 $T_4$ must also see $T_2$, which indeed holds.
Yet, $T_3$, which is assigned the weaker level RA
 and lies in the visible past of $T_4$, 
 is not thereby required to satisfy transitive visibility. 
  In particular, although $T_3$ sees $T_2$ and $T_2$ sees $T_1$, $T_3$ need not see $T_1$, 
      which is allowed under RA.
  In other words, 
 the visibility requirements imposed on $T_4$ by CC do not propagate to transactions assigned weaker isolation levels.

 
 We address this challenge
  by specifying, 
  for each transaction, 
  axioms over the set of transactions visible to it, 
   based on its assigned isolation level.

%

\inlsec{Case Studies}
Leveraging \ourframework,
we  establish
 the semantic conformance of 
 two
  concurrency control protocols:
(i) the protocol used in Microsoft SQL Server and Oracle Berkeley DB~\cite{Allocating:PODS2005,ModOrMixing:DASFAA2008}  combining \SI{} and strict two-phase locking (S2PL); 
and 
(ii) a new protocol mixing three isolation levels ($\PC$, $\SI$, and $\SER$). 
Our proofs follow prior construction-based approach~\cite{burckhardtPrinEvenCons2014,xiongECOOP2020},
and are currently carried out manually.

\section{Preliminaries}
\label{sec-models}


\textbf{Isolation Levels.}
Databases support a range of isolation levels
to accommodate different trade-offs between consistency  and performance.
These include 
\emph{read atomicity}
($\RA$)~\cite{RAMP:TODS2016},
\emph{transactional causal consistency} ($\CC$)~\cite{Cure:ICDCS2016},
\emph{prefix consistency} ($\PC$)~\cite{GSP:ECOOP2015},
\emph{parallel snapshot isolation} ($\PSI$)~\cite{PSI:SOSP2011},
\emph{snapshot isolation} ($\SI$)~\cite{Critique:SIGMOD1995},
and
 \emph{serializability} ($\SER$)~\cite{SER:JACM1979}.
Appendix~\ref{app:isolation} illustrates these isolation levels with examples.

%
%
%
%
%

\inlsec{Transactions}
We consider a transactional key-value store (KVS) managing a set of keys $\Key = \set{x,y,z,\dots}$ with values from a set $\Val$.
We denote by $\Op$ the set of read and write operations on keys:
$\Op = \set{\R_{\opid}(x,v),\;\W_{\opid}(x,v) \mid \opid \in \OpId,\; x \in \Key,\; v \in \Val}$,
where $\OpId$ is the set of operation identifiers (omitted when unimportant).
	A \emph{transaction} is a pair $(\O,\po)$ where $\O \subseteq \Op$ is a finite non-empty set of operations and $\po \subseteq \O\times\O$ is a strict total order called the \emph{program order}.


For $T=(\O,\po)$ and  $o\in\O$ on key $x$, let $\po^{-1}_{x}(o) \triangleq \{o'\in\O \mid o' = \_(x,\_) \land o' \rel{\po} o\}$ be the set of operations on $x$ preceding $o$ in $\po$.
We write $T \vdash \W(x,v)$ if $T$ writes $v$ to $x$, and $T \vdash \R(x,v)$ if $T$ reads value $v$ from $x$. 
Let $\WriteTx_{x} \triangleq \{T \mid T \vdash \W(x,\_)\}$. Transactions $T$ and $T'$ \emph{write-write conflict}, written $T \wwconflict T'$, if $T,T'\in\WriteTx_x$ for some $x$.

\inlsec{Histories}
Clients interact with the KVS by issuing transactions in sessions. 
A \emph{history} records the client-visible outcomes of these interactions.
Formally,	a \emph{history} is a triple $\H=(\T,\SO,\level)$ where $\T$ is a set of transactions with disjoint operation sets, $\SO\subseteq\T\times\T$ is the session order (a union of strict total orders, one per session), and $\level:\T\to\IL$ assigns an isolation level to each transaction. 

We consider $\IL\triangleq\allLevelsSet$, ranged over by $\il$. We often write $T[\il]$ to denote a transaction $T$ with $\level(T)=\il$ and call it an $\il$-transaction.
\vspace{-1ex}

\section{\ourframework: Mixing Isolation Levels }
\label{sec-mil}
This section presents the \ourframework semantic framework for mixed isolation levels.
 At a high level, 
it extends the $(\VIS, \AR)$ axiomatic framework  in~\cite{Framework:CONCUR2015}
with 
\emph{per-transaction} visibility. 


\inlsec{Abstract Executions}
An  execution is modeled using two relations over transactions:
\emph{visibility} ($\VIS$), which specifies whose effects are observed by each transaction,
and \emph{arbitration} ($\AR$), which  defines  a global commit order.
Formally,
an  \bfit{abstract execution} is $\AE = (\T, \SO, \level, \allowbreak \VIS, \AR)$, 
where $(\T, \SO, \level)$ is a history,
$\VIS \subseteq \T \times \T$ is an acyclic relation,
and $\AR \subseteq \T \times \T$ is a strict total order with $\VIS \subseteq \AR$.

We write $T' \rel{\VIS} T$ for $(T', T) \in \VIS$, and similarly for $\AR$.
$\VIS^{-1}(T) \triangleq \set{S \mid S \rel{\VIS} T}$ is the \emph{visible set} of $T$.
$T' \rel{\nVIS} T$ abbreviates $T' \rel{\AR} T \land \lnot(T' \rel{\VIS} T)$.





\small
\begin{figure}[t]
	\centering
	\begin{minipage}{1.0\columnwidth}
		\begin{align*}
			\Int(T) \equiv\;         & \forall r, x, v.\;
			(r = \R(x, v) \land \po^{-1}_{x}(r) \neq \emptyset) \implies                                                    \\
			                         & \; \max_{\po}(\po^{-1}_{x}(r)) = \_(x,v).                                               \\
			\Ext(T) \equiv\;         & \forall x, v.\; T \vdash \R(x, v) \implies                                           \\
			                         & \; \max_{\AR}(\VIS^{-1}(T) \cap \WriteTx_{x}) \vdash \W(x, v).                          \\
			\Session(T) \equiv\;     & \forall T' \in \T.\;
			T' \rel{\SO} T
			\implies T' \rel{\VIS} T.                                                                                       \\
			\TransVis(T) \equiv\;    & \forall T', S \in \T.\;
			T' \rel{\VIS} S \rel{\VIS} T \implies T' \rel{\VIS} T.                                                          \\
			\Prefixaxiom(T) \equiv\; & \forall T', S \in \T.\;
			T' \rel{\AR} S \rel{\VIS} T \implies T' \rel{\VIS} T.                                                           \\
			\NoConflict(T) \equiv\;  & \forall T' \in \T.\; T' \wwconflict T \land T' \rel{\AR} T \implies T' \rel{\VIS} T. \\
			\TotalVis(T) \equiv\;    & \forall T' \in \T.\; T' \rel{\AR} T \implies T' \rel{\VIS} T.
		\end{align*}
	\end{minipage}
	\captionsetup{skip=5pt}
	\caption{Consistency axioms for individual transactions.}
	\label{fig:axioms-mil}
	\vspace{-5pt}
\end{figure}
\normalsize

\inlsec{Consistency Axioms}
An isolation level $\il$ for a transaction $T$ 
is specified as a set of axioms
constraining its visible set 
 $\VIS^{-1}(T)$. 
 The \bfit{consistency axioms} on $T$ are shown in Figure~\ref{fig:axioms-mil}.

A read $r = \R(k, \_)$ of $T$ is \emph{external} if $\po_{x}^{-1}(r) = \emptyset$
($T$ has not previously written to $k$), and \emph{internal} otherwise.
$\Int(T)$ requires $T$ to read its own previous writes (internal consistency).
$\Ext(T)$ requires each external read observe the $\AR$-maximal visible write to the same key.
$\Session(T)$ 
requires $T$ to observe all transactions that precede it in session order. 
$\TransVis(T)$ makes visibility transitive: if $T$ observes $S$, 
it
must also observe every transaction observed by $S$.
$\Prefixaxiom(T)$ forces the visible set of  $T$  to be a prefix of $\AR$.
$\NoConflict(T)$ forbids write-write conflicts between transactions concurrent with $T$.
$\TotalVis(T)$ requires $T$ to observe \emph{all} of its $\AR$-predecessors.


\begin{table*}[ht]
	\captionsetup{skip=3pt}
	\caption{Overview of the construction and verification steps for the two case studies. 
		$T.\sts$  and $T.\cts$ 
		denote the start and commit   timestamps of transaction $T$, respectively.
	}
	\label{tab:proof-comparison}
	\footnotesize
	\centering
	\renewcommand{\arraystretch}{1.25}
	\setlength{\tabcolsep}{5pt}
	\begin{tabularx}{\textwidth}{@{}p{0.17\textwidth}>{\hsize=0.9\hsize}X>{\hsize=1.1\hsize}X@{}}
		\toprule
		\textbf{Proof Step}
		& \textbf{SI--S2PL (Section~\ref{ss-si-ser-fekete})}
		& \textbf{PC--SI--SER (Appendix~\ref{app-pc-si-ser})} \\
		\midrule
		Constructing $\AR$
		& Commit timestamp order: $T'.\cts < T.\cts$.
		& Commit timestamp order: $T'.\cts < T.\cts$. \\
		Constructing $\VIS$
		&
		$T[\SI]$ reads from snapshot $T.\sts > T'.\cts$;
		$T[\SER]$ reads from the latest $T'.\cts < T.\cts$.
		& $T[\PC]$ and $T[\SI]$ read from snapshot $T.\sts > T'.\cts$;
		$T[\SER]$ effectively reads from the latest $T'.\cts < T.\cts$ due to conflict detection. \\   \midrule
		Verifying weaker axioms
		& $\Int$, $\Ext$, $\Session$, $\Prefixaxiom$, $\NoConflict$:
		$\NoConflict$ enforced by the \emph{first-committer-wins} rule for $\SI$.
		& $\Int$, $\Ext$, $\Session$, $\Prefixaxiom$, $\NoConflict$:
		$\Prefixaxiom$ required by both $\PC$ and $\SI$;
		$\NoConflict$ required only by $\SI$. \\
		Verifying axiom  $\TotalVis$ 
		& $\TotalVis$ via S2PL:
		$\SER$ holds exclusive locks until commit,
		ensuring visibility of all prior commits.
		& $\TotalVis$ via timestamp ordering:
		$\SER$ reads the latest value of every key,
		yielding $\forall T' \!\rel{\AR}\! T \!\implies\! T' \!\rel{\VIS}\! T$. \\
		\bottomrule
	\end{tabularx}
\end{table*}

\small
\begin{figure}[tb]
    \centering
    \captionsetup{skip=0pt}
    \[
    \begin{aligned}
        \RA(T) &\equiv \Int(T) \land \Ext(T) \land \Session(T)\\[4pt]
        \CC(T) &\equiv \RA(T) \land \TransVis(T)\\[4pt]
        \PC(T) &\equiv \RA(T) \land \Prefixaxiom(T)\\[4pt]
        \PSI(T) &\equiv \RA(T) \land \TransVis(T) \land \NoConflict(T)\\[4pt]
        \SI(T) &\equiv \RA(T) \land \Prefixaxiom(T) \land \NoConflict(T)\\[4pt]
        \SER(T) &\equiv \RA(T) \land \TotalVis(T)
    \end{aligned}
    \]
    \caption{Isolation levels for individual transactions.}
    \label{fig:isolevel}
\end{figure}
\normalsize

\inlsec{Formalizing Mixed Isolation Levels}
We now define consistency for abstract executions and histories. 
An abstract execution $\AE = (\T, \SO, \level, \allowbreak \VIS, \AR)$
is  \bfit{consistent} if, 
 for each transaction $T \in \T$,
all  consistency axioms
 associated with 
its isolation level $\level(T)$ hold on $T$. 
Figure~\ref{fig:isolevel} presents
 the consistency axioms for each isolation level. 
A history $\H$ is then considered
 \bfit{consistent} if there exists a consistent abstract execution 
$\AE = (\H, \VIS, \AR)$.

\begin{definition}
\textit{A concurrency control protocol conforms to a set of  mixed isolation guarantees 
if	 every history it produces is consistent 
with respect to them. }
\end{definition}

In \ourframework, 
$\VIS$ determines visibility on a per-transaction basis, 
whereas $\AR$ provides a shared total order over all transactions.
Homogeneous isolation settings correspond to the special case where all transactions are assigned the same isolation level.
Thus, 
\ourframework generalizes the framework of~\cite{Framework:CONCUR2015} 
(see Appendix~\ref{app-mil-il}).
To validate \ourframework, 
we also establish its equivalence to a recent formalization of mixed isolation levels~\cite{ComplexityMIL:CAV2025} on their common isolation levels (i.e., $\RA$, $\PC$, $\SI$, and $\SER$);
the proof is given in Appendix~\ref{app-mil-cav-equiv}.



\begin{figure}[t]
	\vspace{-2ex}
	\centering
	\includegraphics[width=.75\columnwidth]{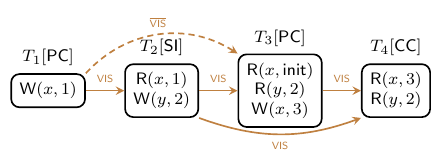}
	\captionsetup{skip=0pt}
	\caption{%
		 An inconsistent execution: $\Prefixaxiom(T_3)$ is violated.
	}
	\label{fig:consistent}
	    	\vspace{-2ex}
\end{figure}

 \paragraph{Example}
Figures~\ref{fig:intro-example} and \ref{fig:consistent} show two abstract executions.
Both share $T_1[\PC] \rel{\VIS} T_2[\SI] \rel{\VIS} T_4[\CC]$
and differ only in $\level(T_3)$.
The execution in Figure~\ref{fig:intro-example} is consistent 
as all axioms required by each transaction's isolation level are satisfied.
In contrast, 
 the execution in Figure~\ref{fig:consistent} 
is inconsistent.
$T_3[\PC]$  observes $T_2$ (via $\R(y,2)$) but not $T_1$,
which violates
 $\Prefixaxiom(T_3)$ as
$T_1 \rel{\AR} T_2 \rel{\VIS} T_3$
implies
$T_1 \rel{\VIS} T_3$.

\section{Checking Semantic Conformance}
\label{sec-conformance}

This section demonstrates how \ourframework can be used to establish the semantic conformance of concurrency control protocols to their intended mixed isolation guarantees.


\subsection{Proof Approach }
\label{ss-proof-method}

Our semantic conformance proof follows prior construction-based approaches~\cite{burckhardtPrinEvenCons2014,xiongECOOP2020},
which
consists of two main steps.

\inlsec{\ding{182} Construction}
We extract the $\VIS$ and $\AR$ relations from the protocol specification, 
 such as its pseudocode or operational semantics.
Typically,
  $\AR$ is obtained from the commit order of transactions 
  (e.g., commit timestamps),
   whereas $\VIS$ is derived by a case analysis on $\level(T)$ to
   capture the visibility requirements of the corresponding isolation level.
   
   After constructing $\VIS$ and $\AR$,
    one must first show that they form a valid abstract execution:
    (i) $\VIS$ is acyclic, 
(ii)    $\AR$ is a strict total order,
     and 
(iii)     $\VIS \subseteq \AR$.
   For example, (i) follows from deriving $\VIS$ from a strict partial order, 
   (ii) from unique commit timestamps, 
   and
     (iii) from the fact that visibility is bounded by commit order.
 
%

  \inlsec{\ding{183} Verification}
We show that the extracted relations  satisfy the consistency axioms
  required by 
  each transaction's
  assigned isolation level. 
Typically, 
 $\Int(T)$ follows from local write buffering,
  while $\Session(T)$, $\Prefixaxiom(T)$, and $\TransVis(T)$
    reduce to  reasoning over the construction of $\VIS$.
For instance, when $\VIS$ is derived from timestamps,
  these axioms follow from timestamp monotonicity and transitivity.
More generally, any construction that derives $\VIS$ from a transitive order, e.g., logical clocks, establishes them directly.
The main proof obligations are therefore $\Ext(T)$, 
 showing that each external read observes the $\AR$-maximal visible write,
   and
     the level-specific axioms (e.g., $\NoConflict(T)$ and $\TotalVis(T)$), 
      which are typically established using protocol-specific mechanisms 
      for conflict detection and resolution.

Table~\ref{tab:proof-comparison} outlines these  steps for  two case studies:
the SI--S2PL protocol~\cite{Allocating:PODS2005,ModOrMixing:DASFAA2008}, which mixes \SI{} and \SER{} and is used in Microsoft SQL Server and Oracle Berkeley DB, and a new protocol that combines  \PC, \SI, and \SER. 
In what follows,
  we focus on the former and defer the latter to Appendix~\ref{app-pc-si-ser}.

\subsection{Case Study: The SI--S2PL Protocol }
\label{ss-si-ser-fekete}


Algorithm~\ref{alg:si-ser-fekete} shows the pseudocode of the SI--S2PL protocol.
Its key idea is
to unify $\SI$ (optimistic, snapshot-based) and $\SER$ (pessimistic, lock-based)
through a  versioned store
$\store : \Key \to (\TS \partialmapsto \Val)$.
Transactions of both levels buffer writes locally in $T.\buffer$
(line~\ref{line:writetxn-buffer-si-ser-fekete}),
obtain commit timestamps ($T.\cts$)
from a shared clock
(line~\ref{line:committxn-committime-si-ser-fekete}),
and
install committed versions into  the store at their commit timestamps
(line~\ref{line:committxn-si-install-writes-si-ser-fekete}).
An $\SI$ transaction additionally records a start timestamp $T.\sts$ that defines its snapshot
(line~\ref{line:starttxn-sts-si-ser-fekete}).

During execution,
writes are buffered in $T.\buffer$
(line~\ref{line:writetxn-buffer-si-ser-fekete}).
An $\SER$ transaction acquires $\slock$  (shared locks) on reads
(line~\ref{line:readtxn-ser-slock})
and $\xlock$ (exclusive locks) on writes
(line~\ref{line:writetxn-ser-xlock}),
whereas an $\SI$ transaction performs reads without locking and acquires $\xlock$ on its write set only during commit
(line~\ref{line:committxn-si-xlock}).
All locks obey standard compatibility rules
and are held until commit (S2PL).
For reads, buffered values are returned whenever available
(line~\ref{line:readtxn-buffer-si-ser-fekete}).
Otherwise, an $\SI$ transaction reads the latest version whose timestamp precedes $T.\sts$
(line~\ref{line:readtxn-si-store-si-ser-fekete}),
while
an $\SER$ transaction reads
the latest version after acquiring the corresponding shared lock
(lines~\ref{line:readtxn-ser-slock}--\ref{line:readtxn-ser-store-si-ser-fekete}).

At commit,
an $\SI$ transaction first acquires $\xlock$ on its write set
(line~\ref{line:committxn-si-xlock}).
Both levels then obtain a fresh commit timestamp 
(line~\ref{line:committxn-committime-si-ser-fekete}).
An $\SI$ transaction then performs the
\emph{first-committer-wins}
check~\cite{Critique:SIGMOD1995}
(line~\ref{line:committxn-write-conflict-checking-si-ser-fekete}):
if any concurrent transaction $T'$
whose commit timestamp
lies in
$(T.\sts, T.\cts)$
wrote  to a key
in  $T$'s   write set,
then $T$ aborts.
An $\SER$ transaction skips this check, as conflicting writes are already prevented by exclusive locks.
Finally, the buffered writes are installed into the store
and all locks are released
(lines~\ref{line:committxn-si-install-writes-si-ser-fekete}--\ref{line:committxn-releaselocks}).

\begin{algorithm}[!t]
  \footnotesize\raggedright
  \caption{The SI-S2PL protocol \cite{Allocating:PODS2005,ModOrMixing:DASFAA2008} }
  \label{alg:si-ser-fekete}
  \begin{algorithmic}[1]
		\Statex $\store: \Key \to (\TS \partialmapsto \Val)$:
		  versioned KV store, mapping each key to timestamped values
		\Statex $T.\buffer: \Key \partialmapsto \Val$:
		  write buffer of transaction $T$ 

		\hStatex
    \Procedure{\StartProc}{$T$}
      \label{line:starttxn-si-ser-fekete}
			\If{$\level(T) = \SI$}
			  \Comment{no $\sts$ for \SER{} transactions}
				\State $T.\sts \gets \now$
					\label{line:starttxn-sts-si-ser-fekete}
			\EndIf
    \EndProcedure

		\hStatex
    \Procedure{\WriteProc}{$T, k, v$}
      \label{line:writetxn-si-ser-fekete}
			\If{$\level(T) = \SER$}
			  \label{line:writetxn-ser}
				\State $T.\xlock(k)$
					\Comment{acquire an exclusive lock on $k$}
					\label{line:writetxn-ser-xlock}
			\EndIf
			\State $T.\buffer[k] \gets v$
				\Comment{buffer $v$ and overwrite previous writes to $k$}
				\label{line:writetxn-buffer-si-ser-fekete}
    \EndProcedure

		\hStatex
    \Procedure{\ReadProc}{$T, k$}
		  \label{line:proc-readtxn-si-ser-fekete}
			\If{$k \in \dom(T.\buffer)$}
			  \Comment{internal read from my own write}
				\State \Return $T.\buffer[k]$
				\label{line:readtxn-buffer-si-ser-fekete}
			\EndIf
			\If{$\level(T) = \SI$}
			  \Comment{external read from snapshot}
				\State \Return value of $\store[k]$ at latest timestamp $< T.\sts$
					\label{line:readtxn-si-store-si-ser-fekete}
			\EndIf
			\If{$\level(T) = \SER$}
			  \Comment{external read from latest committed value}
				\State $T.\slock(k)$
					\Comment{acquire a shared lock on $k$}
				  \label{line:readtxn-ser-slock}
				\State \Return value of $\store[k]$ at latest timestamp
				  \label{line:readtxn-ser-store-si-ser-fekete}
			\EndIf
    \EndProcedure

		\hStatex
    \Procedure{\CommitProc}{$T$}
      \label{line:proc-committxn-si-ser-fekete}
			\If{$\level(T) = \SI$}
			  \label{line:committxn-si-buffer}
				\State $T.\xlock(k), \forall k \in \dom(T.\buffer)$
				\Comment{acquire exclusive locks}
				\label{line:committxn-si-xlock}
			\EndIf
			\State $T.\cts \gets \now$
				\Comment{$\level(T) \in \set{\SI, \SER}$}
				\label{line:committxn-committime-si-ser-fekete}
			\If{$\level(T) = \SI$}
				\Comment{$\level(T') \in \set{\SI, \SER}$}
				\If{$\exists T' \wwconflict T.\; T'.\cts \in (T.\sts, T.\cts)$}
					\label{line:committxn-write-conflict-checking-si-ser-fekete}
					\State $T.\releaseLocks()$
					\State \Return $\aborted$
			\EndIf
		\EndIf
		\State $\store[k] \gets [T.\cts \mapsto v]$ for all $[k \mapsto v] \in T.\buffer$
			\Comment{install writes} 
			\label{line:committxn-si-install-writes-si-ser-fekete}
		\State $T.\releaseLocks()$
			\Comment{$\level(T) \in \set{\SI, \SER}$}
			\label{line:committxn-releaselocks}
		\State \Return $\committed$
    \EndProcedure
  \end{algorithmic}
\end{algorithm}
\normalsize



\begin{theorem}
	\label{thm:si-ser-fekete}
Algorithm~\ref{alg:si-ser-fekete} conforms to the mixed isolation guarantees of \SI{} and \SER.
\end{theorem}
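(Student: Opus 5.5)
We follow the construction-based approach of Section~\ref{ss-proof-method}. Fix an arbitrary history $\H=(\T,\SO,\level)$ produced by Algorithm~\ref{alg:si-ser-fekete}, restricted to committed transactions. We treat $\now$ as a strictly monotone global clock, so all timestamps are distinct. Let $T.\mathit{ev}$ denote the instant of $T$'s commit step, i.e., when line~\ref{line:committxn-committime-si-ser-fekete} executes together with the installation and unlocking that follow it. We assume this step is atomic with respect to other commits; this is consistent with the clock reading being unique and with installation under held locks.

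\textbf{Construction.} Define $T' \rel{\AR} T$ iff $T'.\cts < T.\cts$. Define $\VIS$ per level of the reader $T$:
\begin{itemize}[leftmargin=*]
\item if $\level(T)=\SI$, then $T' \rel{\VIS} T$ iff $T'.\cts < T.\sts$;
\item if $\level(T)=\SER$, then $T' \rel{\VIS} T$ iff $T'.\cts < T.\cts$, i.e., $\VIS^{-1}(T)=\AR^{-1}(T)$.
\end{itemize}
Since $T.\sts<T.\cts$, we get $\VIS\subseteq\AR$. Acyclicity of $\VIS$ follows, and $\AR$ is a strict total order because commit timestamps are unique. $\Int(T)$ follows from line~\ref{line:readtxn-buffer-si-ser-fekete}, because the buffer keeps the last write to each key. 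For $\Session(T)$, we assume a session issues $T$ only after its predecessor $T'$ has committed. Then $T'.\cts < T.\sts$ for $\SI$ and $T'.\cts<T.\cts$ for $\SER$. $\TotalVis$ for $\SER$ holds by definition. $\Prefixaxiom$ for $\SI$ holds because $T'.\cts<S.\cts<T.\sts$.

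\textbf{$\Ext$ and $\NoConflict$ for $\SI$.} For an external read of $k$ by an $\SI$-transaction $T$, line~\ref{line:readtxn-si-store-si-ser-fekete} returns the version of $k$ with the largest timestamp below $T.\sts$. This is exactly $\max_{\AR}(\VIS^{-1}(T)\cap\WriteTx_k)$, provided that no transaction with $\cts<T.\sts$ installs its writes after $T$ reads. Atomicity of the commit step gives this. $\NoConflict(T)$ is the first-committer-wins check. Suppose $T'\wwconflict T$ with $T'.\cts<T.\cts$ and $\neg(T'.\cts<T.\sts)$. Then $T'.\cts\in(T.\sts,T.\cts)$, since equality is impossible for distinct clock readings. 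In that case line~\ref{line:committxn-write-conflict-checking-si-ser-fekete} would have aborted $T$.

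\textbf{$\Ext$ for $\SER$ (main obstacle).} Let a $\SER$-transaction $T$ externally read $k$ at time $t$, holding $\slock(k)$ from $t$ until $T.\mathit{ev}$. We must show that the version read is the one with the largest $\cts<T.\cts$ among writers of $k$. Two things need ruling out.
\begin{itemize}[leftmargin=*]
\item \emph{Writers committing in $(t,T.\cts)$.} Any writer $W$ of $k$ must hold $\xlock(k)$ at its commit step. For $\SER$ writers this is acquired at write time; for $\SI$ writers it is acquired at line~\ref{line:committxn-si-xlock}. This lock is incompatible with $T$'s $\slock(k)$, so $W.\mathit{ev}\notin(t,T.\mathit{ev})$.
\item \emph{Writers with $\cts<T.\cts$ not yet installed at $t$.} A writer whose commit step precedes $t$ has already installed its version. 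Atomicity of the commit step therefore rules this case out.
\end{itemize}
Internal writes of $T$ itself are covered by $\Int$. The delicate point is exactly this interaction between $\SI$ writers, which lock only at commit, and $\SER$ readers. The argument needs both that the $\SI$ writer holds $\xlock$ when taking $\cts$ and that timestamp order coincides with commit-step order. If atomicity were unavailable, we would instead order the clock read and the installation both within the lock-holding interval, and argue that the order of lock intervals on $k$ matches $\cts$ order.
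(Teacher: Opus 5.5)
Your proposal is correct and takes essentially the same route as the paper. Both use the commit-timestamp $\AR$, the same level-split $\VIS$ ($T'.\cts < T.\sts$ for $\SI$, $T'.\cts < T.\cts$ for $\SER$), first-committer-wins for $\NoConflict$, and $\slock$/$\xlock$ incompatibility to rule out writers of $k$ committing between the read and $T.\cts$ in the $\SER$ case of $\Ext$. The differences are minor: you state explicitly the atomicity of the commit step, which the paper's snapshot-read argument relies on implicitly, and you check only the axioms each level requires, whereas the paper also verifies $\Prefixaxiom$ and $\NoConflict$ for $\SER$ transactions.
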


\smallskip
\noindent
\textit{Proof Sketch.}
We follow  the proof steps  shown in Table~\ref{tab:proof-comparison}.

\smallskip\noindent
\textbf{$\bullet$ Constructing $\AR$.}
$\AR$  follows  commit-timestamp order:
$T' \rel{\AR} T \iff T'.\cts < T.\cts$.
Since commit timestamps are unique and  
strictly increasing, 
$\AR$ is a strict total order.

\smallskip\noindent
\textbf{$\bullet$ Constructing $\VIS$.}
$\VIS$ is obtained by a case analysis on $\level(T)$:
{\abovedisplayskip=2pt \belowdisplayskip=2pt
\begin{align*}
  T' \rel{\VIS} T \iff
   & \; (\level(T) = \SI \land T'.\cts < T.\sts) \;\lor\; \\
   & \; (\level(T) = \SER \land T'.\cts < T.\cts).
\end{align*}}

For $\SI$, $\VIS$ captures the snapshot at  $T.\sts$
(line~\ref{line:readtxn-si-store-si-ser-fekete}).
For $\SER$, $\VIS$ includes all prior commits
(line~\ref{line:readtxn-ser-store-si-ser-fekete})
using
 $T'.\cts < T.\cts$  
 rather than $T'.\cts < T.\sts$
so that 
 every $\AR$-predecessor of $T$ is visible.
 By construction, 
$\VIS$ is irreflexive
and $\VIS \subseteq \AR$.

\smallskip\noindent
\textbf{$\bullet$  Verifying Weaker Axioms.}
\begin{itemize}[label=--]
  \item $\Int(T)$ follows from the local write buffer
        (line~\ref{line:readtxn-buffer-si-ser-fekete}).

  \item $\Ext(T)$:
  For $T[\SI]$, the snapshot read
                (line~\ref{line:readtxn-si-store-si-ser-fekete})
returns the version with the largest timestamp below $T.\sts$,
i.e., $\max_{\AR}(\VIS^{-1}(T) \cap \WriteTx_x)$.
For  $T[\SER]$,
$T$ acquires $\slock$ before reading
 the latest value
                (line~\ref{line:readtxn-ser-slock})
                 and holds it until commit. 
                 Hence, no transaction can install a later version of $x$ before $T$ commits,
                 so the read returns the $\AR$-maximal visible writer.
                 

  \item $\Session(T)$: $T' \rel{\SO} T$ implies  that $T$ starts after $T'$ commits.
        By timestamp monotonicity, $T'.\cts < T.\sts$ (if $\SI$) or $T'.\cts < T.\cts$ (if $\SER$),
        so $T' \rel{\VIS} T$.

  \item $\Prefixaxiom(T)$ for $T[\SI]$:
        $T' \rel{\AR} S \rel{\VIS} T$ implies
        $T'.\cts < S.\cts$ and $S.\cts < T.\sts$. 
        Hence, $T'.\cts < T.\sts$ and $T' \rel{\VIS} T$.

  \item $\NoConflict(T)$ for $T[\SI]$:
        Let $T' \wwconflict T$ with $T' \rel{\AR} T$.
        If $T.\sts < T'.\cts$, then $T'.\cts \in (T.\sts, T.\cts)$ and the
        first-committer-wins check
        (line~\ref{line:committxn-write-conflict-checking-si-ser-fekete})
        forces $T$ to abort.
        Hence,  $T'.\cts < T.\sts$, yielding $T' \rel{\VIS} T$.
\end{itemize}

\smallskip\noindent
\textbf{$\bullet$ Verifying Axiom $\TotalVis(T)$.}
For $\level(T) = \SER$, the two construction steps  yield  $T' \rel{\AR} T \iff T'.\cts < T.\cts$
and $T'.\cts < T.\cts \iff T' \rel{\VIS} T$, respectively. 
Hence, $\VIS^{-1}(T) = \AR^{-1}(T)$, satisfying $\TotalVis(T)$.

\smallskip
The full proof is given in Appendix~\ref{app-proof-si-ser-fekete}.


\section{Future Work }
\label{sec:concl}

%
%

 We have presented  \ourframework
 for reasoning about mixed isolation levels
 and  
demonstrated
   its applicability through two case studies.  
  \ourframework is designed as a \emph{general} 
semantic framework.
It can also be used to verify the semantic conformance of concurrency control protocols under   homogeneous
isolation settings.
 Our proofs currently rely on manual reasoning;
 automating the verification process 
is a natural next step.

In addition, 
   \ourframework 
   can support 
a variety of other
 applications.
These include
(i)
 \emph{black-box testing}
 of mixed isolation guarantees~\cite{ComplexityMIL:CAV2025}
(e.g., by deriving a dependency-graph-based characterization  in the style of~\cite{AnalysingSI:JACM2018} that is equivalent to \ourframework),
(ii) \emph{robustness checking}~\cite{Allocating:PODS2005} for safely  allocating a wide range of different isolation levels across transaction workloads
 (e.g., by following the approach in~\cite{RobustRA:CONCUR2016}),
 and
 (iii)
  \emph{adaptive concurrency control}~\cite{TxnSails:VLDB2025}
  (e.g., by extending robustness checking to online settings).

\bibliographystyle{ACM-Reference-Format}
\bibliography{ref}


\begin{thebibliography}{33}


\ifx \showCODEN    \undefined \def \showCODEN     #1{\unskip}     \fi
\ifx \showDOI      \undefined \def \showDOI       #1{#1}\fi
\ifx \showISBNx    \undefined \def \showISBNx     #1{\unskip}     \fi
\ifx \showISBNxiii \undefined \def \showISBNxiii  #1{\unskip}     \fi
\ifx \showISSN     \undefined \def \showISSN      #1{\unskip}     \fi
\ifx \showLCCN     \undefined \def \showLCCN      #1{\unskip}     \fi
\ifx \shownote     \undefined \def \shownote      #1{#1}          \fi
\ifx \showarticletitle \undefined \def \showarticletitle #1{#1}   \fi
\ifx \showURL      \undefined \def \showURL       {\relax}        \fi
\providecommand\bibfield[2]{#2}
\providecommand\bibinfo[2]{#2}
\providecommand\natexlab[1]{#1}
\providecommand\showeprint[2][]{arXiv:#2}

\bibitem[\protect\citeauthoryear{Adya}{Adya}{1999}]%
        {Adya:PhDThesis1999}
\bibfield{author}{\bibinfo{person}{A. Adya}.} \bibinfo{year}{1999}\natexlab{}.
\newblock \emph{\bibinfo{title}{Weak Consistency: A Generalized Theory and Optimistic Implementations for Distributed Transactions}}.
\newblock \bibinfo{thesistype}{Ph.D. Dissertation}. \bibinfo{school}{Massachusetts Institute of Technology}, \bibinfo{address}{USA}.
\newblock


\bibitem[\protect\citeauthoryear{Akkoorath, Tomsic, Bravo, Li, Crain, Bieniusa, Pregui{\c{c}}a, and Shapiro}{Akkoorath et~al\mbox{.}}{2016}]%
        {Cure:ICDCS2016}
\bibfield{author}{\bibinfo{person}{Deepthi~Devaki Akkoorath}, \bibinfo{person}{Alejandro~Z. Tomsic}, \bibinfo{person}{Manuel Bravo}, \bibinfo{person}{Zhongmiao Li}, \bibinfo{person}{Tyler Crain}, \bibinfo{person}{Annette Bieniusa}, \bibinfo{person}{Nuno~M. Pregui{\c{c}}a}, {and} \bibinfo{person}{Marc Shapiro}.} \bibinfo{year}{2016}\natexlab{}.
\newblock \showarticletitle{Cure: Strong Semantics Meets High Availability and Low Latency}. In \bibinfo{booktitle}{\emph{{ICDCS} 2016}}. \bibinfo{publisher}{{IEEE} Computer Society}, \bibinfo{pages}{405--414}.
\newblock
\urldef\tempurl%
\url{https://doi.org/10.1109/ICDCS.2016.98}
\showDOI{\tempurl}


\bibitem[\protect\citeauthoryear{Alnor~Mathiasen, Gondelman, Ducruet, Timany, and Birkedal}{Alnor~Mathiasen et~al\mbox{.}}{2025}]%
        {10.1145/3747515}
\bibfield{author}{\bibinfo{person}{Anders Alnor~Mathiasen}, \bibinfo{person}{L\'{e}on Gondelman}, \bibinfo{person}{L\'{e}on Ducruet}, \bibinfo{person}{Amin Timany}, {and} \bibinfo{person}{Lars Birkedal}.} \bibinfo{year}{2025}\natexlab{}.
\newblock \showarticletitle{Reasoning about Weak Isolation Levels in Separation Logic}.
\newblock \bibinfo{journal}{\emph{Proc. ACM Program. Lang.}} \bibinfo{volume}{9}, \bibinfo{number}{ICFP}, Article \bibinfo{articleno}{246} (\bibinfo{date}{Aug.} \bibinfo{year}{2025}), \bibinfo{numpages}{35}~pages.
\newblock
\urldef\tempurl%
\url{https://doi.org/10.1145/3747515}
\showDOI{\tempurl}


\bibitem[\protect\citeauthoryear{Alomari, Cahill, Fekete, and R{\"o}hm}{Alomari et~al\mbox{.}}{2008}]%
        {ModOrMixing:DASFAA2008}
\bibfield{author}{\bibinfo{person}{Mohammad Alomari}, \bibinfo{person}{Michael Cahill}, \bibinfo{person}{Alan Fekete}, {and} \bibinfo{person}{Uwe R{\"o}hm}.} \bibinfo{year}{2008}\natexlab{}.
\newblock \showarticletitle{Serializable {{Executions}} with {{Snapshot Isolation}}: {{Modifying Application Code}} or {{Mixing Isolation Levels}}?}. In \bibinfo{booktitle}{\emph{DASFAA 2008}}, Vol.~\bibinfo{volume}{4947}. \bibinfo{pages}{267--281}.
\newblock
\urldef\tempurl%
\url{https://doi.org/10.1007/978-3-540-78568-2_21}
\showDOI{\tempurl}


\bibitem[\protect\citeauthoryear{Bailis, Fekete, Ghodsi, Hellerstein, and Stoica}{Bailis et~al\mbox{.}}{2016}]%
        {RAMP:TODS2016}
\bibfield{author}{\bibinfo{person}{Peter Bailis}, \bibinfo{person}{Alan Fekete}, \bibinfo{person}{Ali Ghodsi}, \bibinfo{person}{Joseph~M. Hellerstein}, {and} \bibinfo{person}{Ion Stoica}.} \bibinfo{year}{2016}\natexlab{}.
\newblock \showarticletitle{Scalable {{Atomic Visibility}} with {{RAMP Transactions}}}.
\newblock \bibinfo{journal}{\emph{ACM Trans. Database Syst.}} \bibinfo{volume}{41}, \bibinfo{number}{3} (\bibinfo{date}{July} \bibinfo{year}{2016}), \bibinfo{pages}{15:1--15:45}.
\newblock
\urldef\tempurl%
\url{https://doi.org/10.1145/2909870}
\showDOI{\tempurl}


\bibitem[\protect\citeauthoryear{Berenson, Bernstein, Gray, Melton, O'Neil, and O'Neil}{Berenson et~al\mbox{.}}{1995}]%
        {Critique:SIGMOD1995}
\bibfield{author}{\bibinfo{person}{Hal Berenson}, \bibinfo{person}{Phil Bernstein}, \bibinfo{person}{Jim Gray}, \bibinfo{person}{Jim Melton}, \bibinfo{person}{Elizabeth O'Neil}, {and} \bibinfo{person}{Patrick O'Neil}.} \bibinfo{year}{1995}\natexlab{}.
\newblock \showarticletitle{A critique of ANSI SQL isolation levels}.
\newblock \bibinfo{journal}{\emph{SIGMOD Rec.}} \bibinfo{volume}{24}, \bibinfo{number}{2} (\bibinfo{date}{May} \bibinfo{year}{1995}), \bibinfo{pages}{1–10}.
\newblock
\urldef\tempurl%
\url{https://doi.org/10.1145/568271.223785}
\showDOI{\tempurl}


\bibitem[\protect\citeauthoryear{Bernardi and Gotsman}{Bernardi and Gotsman}{2016}]%
        {RobustRA:CONCUR2016}
\bibfield{author}{\bibinfo{person}{Giovanni Bernardi} {and} \bibinfo{person}{Alexey Gotsman}.} \bibinfo{year}{2016}\natexlab{}.
\newblock \showarticletitle{{Robustness against Consistency Models with Atomic Visibility}}. In \bibinfo{booktitle}{\emph{CONCUR 2016}}, Vol.~\bibinfo{volume}{59}. \bibinfo{pages}{7:1--7:15}.
\newblock
\urldef\tempurl%
\url{https://doi.org/10.4230/LIPIcs.CONCUR.2016.7}
\showDOI{\tempurl}


\bibitem[\protect\citeauthoryear{Bouajjani, Enea, and Román-Calvo}{Bouajjani et~al\mbox{.}}{2025}]%
        {ComplexityMIL:CAV2025}
\bibfield{author}{\bibinfo{person}{Ahmed Bouajjani}, \bibinfo{person}{Constantin Enea}, {and} \bibinfo{person}{Enrique Román-Calvo}.} \bibinfo{year}{2025}\natexlab{}.
\newblock \showarticletitle{On the {{Complexity}} of {{Checking Mixed Isolation Levels}} for {{SQL Transactions}}}. In \bibinfo{booktitle}{\emph{CAV 2025}}, Vol.~\bibinfo{volume}{15934}. \bibinfo{pages}{315--337}.
\newblock
\urldef\tempurl%
\url{https://doi.org/10.1007/978-3-031-98685-7_15}
\showDOI{\tempurl}


\bibitem[\protect\citeauthoryear{Burckhardt}{Burckhardt}{2014}]%
        {burckhardtPrinEvenCons2014}
\bibfield{author}{\bibinfo{person}{Sebastian Burckhardt}.} \bibinfo{year}{2014}\natexlab{}.
\newblock \showarticletitle{Principles of {Eventual Consistency}}.
\newblock \bibinfo{journal}{\emph{Foundations and Trends in Programming Languages}} \bibinfo{volume}{1}, \bibinfo{number}{1-2} (\bibinfo{date}{Oct.} \bibinfo{year}{2014}), \bibinfo{pages}{1--150}.
\newblock
\showISSN{2325-1107}
\urldef\tempurl%
\url{https://doi.org/10.1561/2500000011}
\showDOI{\tempurl}


\bibitem[\protect\citeauthoryear{Burckhardt, Leijen, Protzenko, and F{\"a}hndrich}{Burckhardt et~al\mbox{.}}{2015}]%
        {GSP:ECOOP2015}
\bibfield{author}{\bibinfo{person}{Sebastian Burckhardt}, \bibinfo{person}{Daan Leijen}, \bibinfo{person}{Jonathan Protzenko}, {and} \bibinfo{person}{Manuel F{\"a}hndrich}.} \bibinfo{year}{2015}\natexlab{}.
\newblock \showarticletitle{Global {{Sequence Protocol}}: {{A Robust Abstraction}} for {{Replicated Shared State}}}. In \bibinfo{booktitle}{\emph{ECOOP 2015}}, Vol.~\bibinfo{volume}{37}. \bibinfo{pages}{568--590}.
\newblock
\urldef\tempurl%
\url{https://doi.org/10.4230/LIPIcs.ECOOP.2015.568}
\showDOI{\tempurl}


\bibitem[\protect\citeauthoryear{Center}{Center}{2025}]%
        {OracleTxnIsoLevel}
\bibfield{author}{\bibinfo{person}{Oracle~Help Center}.} \bibinfo{year}{Accessed in December 2025}\natexlab{}.
\newblock \bibinfo{title}{Data {{Concurrency}} and {{Consistency}}}.
\newblock \bibinfo{howpublished}{\url{https://docs.oracle.com/en/database/oracle/oracle-database/23/cncpt/data-concurrency-and-consistency.html}}.
\newblock


\bibitem[\protect\citeauthoryear{Cerone, Bernardi, and Gotsman}{Cerone et~al\mbox{.}}{2015}]%
        {Framework:CONCUR2015}
\bibfield{author}{\bibinfo{person}{Andrea Cerone}, \bibinfo{person}{Giovanni Bernardi}, {and} \bibinfo{person}{Alexey Gotsman}.} \bibinfo{year}{2015}\natexlab{}.
\newblock \showarticletitle{{A Framework for Transactional Consistency Models with Atomic Visibility}}. In \bibinfo{booktitle}{\emph{CONCUR 2015}}, Vol.~\bibinfo{volume}{42}. \bibinfo{pages}{58--71}.
\newblock
\urldef\tempurl%
\url{https://doi.org/10.4230/LIPIcs.CONCUR.2015.58}
\showDOI{\tempurl}


\bibitem[\protect\citeauthoryear{Cerone and Gotsman}{Cerone and Gotsman}{2018}]%
        {AnalysingSI:JACM2018}
\bibfield{author}{\bibinfo{person}{Andrea Cerone} {and} \bibinfo{person}{Alexey Gotsman}.} \bibinfo{year}{2018}\natexlab{}.
\newblock \showarticletitle{Analysing Snapshot Isolation}.
\newblock \bibinfo{journal}{\emph{J. ACM}} \bibinfo{volume}{65}, \bibinfo{number}{2}, Article \bibinfo{articleno}{11} (\bibinfo{date}{Jan} \bibinfo{year}{2018}), \bibinfo{numpages}{41}~pages.
\newblock
\urldef\tempurl%
\url{https://doi.org/10.1145/3152396}
\showDOI{\tempurl}


\bibitem[\protect\citeauthoryear{CockroachDB}{CockroachDB}{2025}]%
        {CockroachDBTxnIsoLevel}
\bibfield{author}{\bibinfo{person}{CockroachDB}.} \bibinfo{year}{Accessed in December, 2025}\natexlab{}.
\newblock \bibinfo{title}{{{CockroachDB Docs - SET TRANSACTION}}}.
\newblock \bibinfo{howpublished}{\url{https://www.cockroachlabs.com/docs/stable/set-transaction}}.
\newblock


\bibitem[\protect\citeauthoryear{Crooks, Pu, Alvisi, and Clement}{Crooks et~al\mbox{.}}{2017}]%
        {DBLP:conf/podc/CrooksPAC17}
\bibfield{author}{\bibinfo{person}{Natacha Crooks}, \bibinfo{person}{Youer Pu}, \bibinfo{person}{Lorenzo Alvisi}, {and} \bibinfo{person}{Allen Clement}.} \bibinfo{year}{2017}\natexlab{}.
\newblock \showarticletitle{Seeing is Believing: {A} Client-Centric Specification of Database Isolation}. In \bibinfo{booktitle}{\emph{{PODC}'17}}. \bibinfo{publisher}{{ACM}}, \bibinfo{pages}{73--82}.
\newblock
\urldef\tempurl%
\url{https://doi.org/10.1145/3087801.3087802}
\showDOI{\tempurl}


\bibitem[\protect\citeauthoryear{Fekete}{Fekete}{2005}]%
        {Allocating:PODS2005}
\bibfield{author}{\bibinfo{person}{Alan Fekete}.} \bibinfo{year}{2005}\natexlab{}.
\newblock \showarticletitle{Allocating isolation levels to transactions}. In \bibinfo{booktitle}{\emph{PODS 2005}}. \bibinfo{pages}{206–215}.
\newblock
\urldef\tempurl%
\url{https://doi.org/10.1145/1065167.1065193}
\showDOI{\tempurl}


\bibitem[\protect\citeauthoryear{Ghasemirad, Liu, Sprenger, Multazzu, and Basin}{Ghasemirad et~al\mbox{.}}{2025a}]%
        {VerIso:VLDB2025}
\bibfield{author}{\bibinfo{person}{Shabnam Ghasemirad}, \bibinfo{person}{Si Liu}, \bibinfo{person}{Christoph Sprenger}, \bibinfo{person}{Luca Multazzu}, {and} \bibinfo{person}{David Basin}.} \bibinfo{year}{2025}\natexlab{a}.
\newblock \showarticletitle{VerIso: Verifiable Isolation Guarantees for Database Transactions}.
\newblock \bibinfo{journal}{\emph{Proc. {VLDB} Endow.}} \bibinfo{volume}{18}, \bibinfo{number}{5} (\bibinfo{year}{2025}), \bibinfo{pages}{1362--1375}.
\newblock
\urldef\tempurl%
\url{https://doi.org/10.14778/3718057.3718065}
\showDOI{\tempurl}


\bibitem[\protect\citeauthoryear{Ghasemirad, Sprenger, Liu, Multazzu, and Basin}{Ghasemirad et~al\mbox{.}}{2025b}]%
        {10.1007/978-3-031-90660-2_3}
\bibfield{author}{\bibinfo{person}{Shabnam Ghasemirad}, \bibinfo{person}{Christoph Sprenger}, \bibinfo{person}{Si Liu}, \bibinfo{person}{Luca Multazzu}, {and} \bibinfo{person}{David Basin}.} \bibinfo{year}{2025}\natexlab{b}.
\newblock \showarticletitle{Pushing the Limit: Verified Performance-Optimal Causally-Consistent Database Transactions}. In \bibinfo{booktitle}{\emph{TACAS 2025}}. \bibinfo{publisher}{Springer-Verlag}, \bibinfo{address}{Berlin, Heidelberg}, \bibinfo{pages}{43–62}.
\newblock
\showISBNx{978-3-031-90659-6}
\urldef\tempurl%
\url{https://doi.org/10.1007/978-3-031-90660-2_3}
\showDOI{\tempurl}


\bibitem[\protect\citeauthoryear{Liu}{Liu}{2022}]%
        {10.1145/3494517}
\bibfield{author}{\bibinfo{person}{Si Liu}.} \bibinfo{year}{2022}\natexlab{}.
\newblock \showarticletitle{All in One: Design, Verification, and Implementation of SNOW-optimal Read Atomic Transactions}.
\newblock \bibinfo{journal}{\emph{ACM Trans. Softw. Eng. Methodol.}} \bibinfo{volume}{31}, \bibinfo{number}{3}, Article \bibinfo{articleno}{43} (\bibinfo{date}{March} \bibinfo{year}{2022}), \bibinfo{numpages}{44}~pages.
\newblock
\showISSN{1049-331X}
\urldef\tempurl%
\url{https://doi.org/10.1145/3494517}
\showDOI{\tempurl}


\bibitem[\protect\citeauthoryear{Liu, Multazzu, Wei, and Basin}{Liu et~al\mbox{.}}{2024}]%
        {noc-noc}
\bibfield{author}{\bibinfo{person}{Si Liu}, \bibinfo{person}{Luca Multazzu}, \bibinfo{person}{Hengfeng Wei}, {and} \bibinfo{person}{David~A. Basin}.} \bibinfo{year}{2024}\natexlab{}.
\newblock \showarticletitle{NOC-NOC: Towards Performance-optimal Distributed Transactions}.
\newblock \bibinfo{journal}{\emph{Proc. ACM Manag. Data}} \bibinfo{volume}{2}, \bibinfo{number}{1}, Article \bibinfo{articleno}{9} (\bibinfo{date}{March} \bibinfo{year}{2024}), \bibinfo{numpages}{25}~pages.
\newblock
\urldef\tempurl%
\url{https://doi.org/10.1145/3639264}
\showDOI{\tempurl}


\bibitem[\protect\citeauthoryear{Liu, {\"{O}}lveczky, Wang, Gupta, and Meseguer}{Liu et~al\mbox{.}}{2019a}]%
        {DBLP:journals/fac/LiuOWGM19}
\bibfield{author}{\bibinfo{person}{Si Liu}, \bibinfo{person}{Peter~Csaba {\"{O}}lveczky}, \bibinfo{person}{Qi Wang}, \bibinfo{person}{Indranil Gupta}, {and} \bibinfo{person}{Jos{\'{e}} Meseguer}.} \bibinfo{year}{2019}\natexlab{a}.
\newblock \showarticletitle{Read atomic transactions with prevention of lost updates: {ROLA} and its formal analysis}.
\newblock \bibinfo{journal}{\emph{Formal Aspects Comput.}} \bibinfo{volume}{31}, \bibinfo{number}{5} (\bibinfo{year}{2019}), \bibinfo{pages}{503--540}.
\newblock
\urldef\tempurl%
\url{https://doi.org/10.1007/S00165-019-00489-W}
\showDOI{\tempurl}


\bibitem[\protect\citeauthoryear{Liu, {\"{O}}lveczky, Zhang, Wang, and Meseguer}{Liu et~al\mbox{.}}{2019b}]%
        {ConsMaude:TACAS2019}
\bibfield{author}{\bibinfo{person}{Si Liu}, \bibinfo{person}{Peter~Csaba {\"{O}}lveczky}, \bibinfo{person}{Min Zhang}, \bibinfo{person}{Qi Wang}, {and} \bibinfo{person}{Jos{\'{e}} Meseguer}.} \bibinfo{year}{2019}\natexlab{b}.
\newblock \showarticletitle{Automatic Analysis of Consistency Properties of Distributed Transaction Systems in Maude}. In \bibinfo{booktitle}{\emph{{TACAS} 2019}} \emph{(\bibinfo{series}{LNCS})}, Vol.~\bibinfo{volume}{11428}. \bibinfo{publisher}{Springer}, \bibinfo{pages}{40--57}.
\newblock
\urldef\tempurl%
\url{https://doi.org/10.1007/978-3-030-17465-1\_3}
\showDOI{\tempurl}


\bibitem[\protect\citeauthoryear{MySQL}{MySQL}{2025}]%
        {MySQLTxnIsoLevel}
\bibfield{author}{\bibinfo{person}{MySQL}.} \bibinfo{year}{Accessed in December, 2025}\natexlab{}.
\newblock \bibinfo{title}{MySQL 8.4 Reference Manual - Transaction Isolation Levels}.
\newblock \bibinfo{howpublished}{\url{https://dev.mysql.com/doc/refman/8.4/en/innodb-transaction-isolation-levels.html}}.
\newblock


\bibitem[\protect\citeauthoryear{Papadimitriou}{Papadimitriou}{1979}]%
        {SER:JACM1979}
\bibfield{author}{\bibinfo{person}{Christos~H. Papadimitriou}.} \bibinfo{year}{1979}\natexlab{}.
\newblock \showarticletitle{The Serializability of Concurrent Database Updates}.
\newblock \bibinfo{journal}{\emph{J. ACM}} \bibinfo{volume}{26}, \bibinfo{number}{4} (\bibinfo{date}{Oct.} \bibinfo{year}{1979}), \bibinfo{pages}{631--653}.
\newblock
\urldef\tempurl%
\url{https://doi.org/10.1145/322154.322158}
\showDOI{\tempurl}


\bibitem[\protect\citeauthoryear{PostgreSQL}{PostgreSQL}{2025}]%
        {PostgreSQLTxnIsoLevel}
\bibfield{author}{\bibinfo{person}{PostgreSQL}.} \bibinfo{year}{Accessed in December, 2025}\natexlab{}.
\newblock \bibinfo{title}{PostgreSQL Documentation - Transaction Isolation}.
\newblock \bibinfo{howpublished}{\url{https://www.postgresql.org/docs/17/transaction-iso.html}}.
\newblock


\bibitem[\protect\citeauthoryear{Schultz and Demirbas}{Schultz and Demirbas}{2025}]%
        {10.14778/3750601.3750626}
\bibfield{author}{\bibinfo{person}{William Schultz} {and} \bibinfo{person}{Murat Demirbas}.} \bibinfo{year}{2025}\natexlab{}.
\newblock \showarticletitle{Design and Modular Verification of Distributed Transactions in MongoDB}.
\newblock \bibinfo{journal}{\emph{Proc. VLDB Endow.}} \bibinfo{volume}{18}, \bibinfo{number}{12} (\bibinfo{date}{Aug.} \bibinfo{year}{2025}), \bibinfo{pages}{5045–5058}.
\newblock
\showISSN{2150-8097}
\urldef\tempurl%
\url{https://doi.org/10.14778/3750601.3750626}
\showDOI{\tempurl}


\bibitem[\protect\citeauthoryear{Server}{Server}{2025}]%
        {SQLServerTxnIsoLevel}
\bibfield{author}{\bibinfo{person}{SQL Server}.} \bibinfo{year}{Accessed in December, 2025}\natexlab{}.
\newblock \bibinfo{title}{{{SET TRANSACTION ISOLATION LEVEL}} ({{Transact-SQL}})}.
\newblock \bibinfo{howpublished}{\url{https://learn.microsoft.com/en-us/sql/t-sql/statements/set-transaction-isolation-level-transact-sql?view=sql-server-ver17}}.
\newblock


\bibitem[\protect\citeauthoryear{Sovran, Power, Aguilera, and Li}{Sovran et~al\mbox{.}}{2011}]%
        {PSI:SOSP2011}
\bibfield{author}{\bibinfo{person}{Yair Sovran}, \bibinfo{person}{Russell Power}, \bibinfo{person}{Marcos~K. Aguilera}, {and} \bibinfo{person}{Jinyang Li}.} \bibinfo{year}{2011}\natexlab{}.
\newblock \showarticletitle{Transactional storage for geo-replicated systems}. In \bibinfo{booktitle}{\emph{SOSP '11}}. \bibinfo{pages}{385–400}.
\newblock
\urldef\tempurl%
\url{https://doi.org/10.1145/2043556.2043592}
\showDOI{\tempurl}


\bibitem[\protect\citeauthoryear{TiDB}{TiDB}{2025}]%
        {TiDBTxnIsoLevel}
\bibfield{author}{\bibinfo{person}{TiDB}.} \bibinfo{year}{Accessed in December, 2025}\natexlab{}.
\newblock \bibinfo{title}{{{TiDB Transaction Isolation Levels}}}.
\newblock \bibinfo{howpublished}{\url{https://docs.pingcap.com/tidb/stable/transaction-isolation-levels/}}.
\newblock


\bibitem[\protect\citeauthoryear{Xiong, Cerone, Raad, and Gardner}{Xiong et~al\mbox{.}}{2020}]%
        {xiongECOOP2020}
\bibfield{author}{\bibinfo{person}{Shale Xiong}, \bibinfo{person}{Andrea Cerone}, \bibinfo{person}{Azalea Raad}, {and} \bibinfo{person}{Philippa Gardner}.} \bibinfo{year}{2020}\natexlab{}.
\newblock \showarticletitle{Data {Consistency} in {Transactional Storage Systems}: {A Centralised Semantics}}. In \bibinfo{booktitle}{\emph{{ECOOP} 2020}}, Vol.~\bibinfo{volume}{166}. \bibinfo{publisher}{Schloss Dagstuhl -- Leibniz-Zentrum f\"ur Informatik}, \bibinfo{address}{Dagstuhl, Germany}, \bibinfo{pages}{21:1--21:31}.
\newblock
\showISBNx{978-3-95977-154-2}
\showISSN{1868-8969}
\urldef\tempurl%
\url{https://doi.org/10.4230/LIPIcs.ECOOP.2020.21}
\showDOI{\tempurl}


\bibitem[\protect\citeauthoryear{YugabyteDB}{YugabyteDB}{2025}]%
        {YugabyteDBTxnIsoLevel}
\bibfield{author}{\bibinfo{person}{YugabyteDB}.} \bibinfo{year}{Accessed in December, 2025}\natexlab{}.
\newblock \bibinfo{title}{YugabyteDB Docs - Isolation Levels}.
\newblock \bibinfo{howpublished}{\url{https://docs.yugabyte.com/preview/explore/transactions/isolation-levels/}}.
\newblock


\bibitem[\protect\citeauthoryear{Zhang, Sharma, Szekeres, Krishnamurthy, and Ports}{Zhang et~al\mbox{.}}{2018}]%
        {10.1145/3269981}
\bibfield{author}{\bibinfo{person}{Irene Zhang}, \bibinfo{person}{Naveen~Kr. Sharma}, \bibinfo{person}{Adriana Szekeres}, \bibinfo{person}{Arvind Krishnamurthy}, {and} \bibinfo{person}{Dan R.~K. Ports}.} \bibinfo{year}{2018}\natexlab{}.
\newblock \showarticletitle{Building Consistent Transactions with Inconsistent Replication}.
\newblock \bibinfo{journal}{\emph{ACM Trans. Comput. Syst.}} \bibinfo{volume}{35}, \bibinfo{number}{4}, Article \bibinfo{articleno}{12} (\bibinfo{date}{Dec.} \bibinfo{year}{2018}), \bibinfo{numpages}{37}~pages.
\newblock
\showISSN{0734-2071}
\urldef\tempurl%
\url{https://doi.org/10.1145/3269981}
\showDOI{\tempurl}


\bibitem[\protect\citeauthoryear{Zhuang, Lu, Liu, Chen, Shi, Zhao, Sun, Pan, and Du}{Zhuang et~al\mbox{.}}{2025}]%
        {TxnSails:VLDB2025}
\bibfield{author}{\bibinfo{person}{Qiyu Zhuang}, \bibinfo{person}{Wei Lu}, \bibinfo{person}{Shuang Liu}, \bibinfo{person}{Yuxing Chen}, \bibinfo{person}{Xinyue Shi}, \bibinfo{person}{Zhanhao Zhao}, \bibinfo{person}{Yipeng Sun}, \bibinfo{person}{Anqun Pan}, {and} \bibinfo{person}{Xiaoyong Du}.} \bibinfo{year}{2025}\natexlab{}.
\newblock \showarticletitle{{{TxnSails}}: {{Achieving Serializable Transaction Scheduling}} with {{Self-Adaptive Isolation Level Selection}}}. In \bibinfo{booktitle}{\emph{Proceedings of the {{VLDB Endowment}}}} \emph{(\bibinfo{series}{{{VLDB}}})}, Vol.~\bibinfo{volume}{18}. \bibinfo{pages}{4227--4240}.
\newblock
\showISSN{2150-8097}
\urldef\tempurl%
\url{https://doi.org/10.14778/3749646.3749689}
\showDOI{\tempurl}


\end{thebibliography}

\clearpage
\appendix
\section{Isolation Levels}
\label{app:isolation}

\noindent
\underline{\emph{Read Atomicity} ($\RA$)}~\cite{RAMP:TODS2016}
guarantees that if a transaction observes any update
from another transaction, it must observe all updates from that transaction.
In particular, it rules out \emph{fractured reads},
%
%
 e.g., 
Joey sees that Ross    has added Rachel   as a friend,
but fails to observe Rachel's  addition of Ross, 
resulting in an inconsistent view of a bi-directional friendship.

\smallskip
\noindent
\underline{\emph{Transactional Causal Consistency} ($\CC$)}~\cite{Cure:ICDCS2016}
 ensures that if a transaction $T'$ causally depends on another transaction $T$
(e.g., $T'$ reads a value written by $T$),
then any transaction that observes $T'$ must also observe $T$.
%
%
For example, under $\CC$, Ross observing the comment written by Rachel on Joey's post without seeing the post itself is not allowed.

\smallskip
\noindent
\underline{\emph{Prefix Consistency} ($\PC$)}~\cite{GSP:ECOOP2015}
strengthens $\CC$ by ensuring that concurrent transactions are not observed in different orders, which is illustrated by the  \emph{long-fork} anomaly:
\begin{align*}
	& T_1: \W(x, \mathit{1}) \quad T_3: \R(x, \mathit{1}), \R(y, \bot) \\
	& T_2: \W(y, \mathit{2}) \quad T_4: \R(x, \bot), \R(y, \mathit{2})
\end{align*}
where $T_3$ and $T_4$ observe incompatible prefixes of the commit order.
Equivalently,  under $\PC$, every transaction must observe a prefix of the global commit order.

\smallskip
\noindent
\underline{\emph{Parallel Snapshot Isolation} ($\PSI$)}~\cite{PSI:SOSP2011}
also strengthens $\CC$, but in a different way.
In particular, $\PSI$ forbids  \emph{lost updates}, where concurrent transactions overwrite each other's updates, e.g., when two transactions  deposit money into the same bank account at the same time, but one deposit is lost because it is overwritten by the other.
Unlike $\PC$, $\PSI$ still allows long-fork anomalies.

\smallskip
\noindent
\underline{\emph{Snapshot Isolation} ($\SI$)}~\cite{Critique:SIGMOD1995}
further enhances  both $\PSI$ and $\PC$ by ruling out
long forks  and lost updates.
In particular, in the earlier example, even if transactions $T_3$ and $T_4$
execute on different database replicas, they must observe a convergent snapshot
of the database state.
However, 
$\SI$ allows the 
\emph{write skew} anomaly:
\begin{align*}
	T_1&: \R(\mathit{a1}, 10), \R(\mathit{a2}, 10), \W(\mathit{a1}, -15) \\
	T_2&: \R(\mathit{a1}, 10), \R(\mathit{a2}, 10), \W(\mathit{a2}, -15),
\end{align*}
where both transactions check whether the total balance of the 
two bank accounts exceeds \$20 
before withdrawing \$15 from one of them.
When executed concurrently, both checks succeed, resulting in a negative total balance.

\smallskip
\noindent
\underline{\emph{Serializability} ($\SER$)}~\cite{SER:JACM1979}
forbids
all of the above anomalies.
It requires every concurrent transaction execution  to be equivalent to some serial order of those transactions.


\section{Generalizing the Framework of~\cite{Framework:CONCUR2015}}
\label{app-mil-il}


\begin{table*}[tbp]
  \centering
  	\captionsetup{skip=5pt}
  \caption{Consistency axioms  that constrain an abstract execution
		$\AE = (\T, \SO, \VIS, \AR)$.}
  \label{table:il-axioms}
  \renewcommand{\arraystretch}{1.3}
  \resizebox{\textwidth}{!}{%
    \begin{tabular}{|c|c|c|}
			\hline
			\multicolumn{3}{|c|}{
				$\begin{aligned}[c]
				\forall T \in \T.\; \forall r, x, v.\;
				(r = \R(x, v) \land \po^{-1}_{x}(r) \neq \emptyset)
				\implies \max_{\po}(\po^{-1}_{x}(r)) = \_(x,v).                                                                 \\
				\end{aligned}$ \hfill (\Int)} \\ \hline
			\multicolumn{3}{|c|}{
				$\begin{aligned}[c]
				\forall T \in \T.\; \forall x, v.\;
					T \vdash \R(x, v) \implies
					\max_{\AR}(\VIS^{-1}(T) \cap \WriteTx_{x}) \vdash \W(x, v)
				\end{aligned}$ \hfill (\Ext)}
			\\ \hline
			$\SO \subseteq \VIS$ \hfill (\Session)
			& $\VIS \comp \VIS \subseteq \VIS$ \hfill (\TransVis)
			& $\AR \comp \VIS \subseteq \VIS$ \hfill (\Prefixaxiom)
			\\ \hline
			$\forall T, T' \in \T.\;
				T \wwconflict T' \implies (T \rel{\VIS} T' \lor T' \rel{\VIS} T)$
				\hfill (\NoConflict)
			& \multicolumn{2}{c|}{
				$\AR \subseteq \VIS$ \hfill (\TotalVis)}
			\\ \hline
    \end{tabular}%
  }
\end{table*}


\begin{table}[tb]
	\centering
		\captionsetup{skip=5pt}
	\caption{Isolation levels defined by consistency axioms
	  in Table~\ref{table:il-axioms}.}
	\label{table:il-isolevel}
	\renewcommand{\arraystretch}{1.0}
	\begin{tabular}{p{0.45\columnwidth}|p{0.45\columnwidth}}
\toprule
		$\RA \equiv \Int \land \Ext \land \Session$ &
		$\CC \equiv \RA \land \TransVis$ \\  \midrule
		$\PC \equiv \RA \land \Prefixaxiom$ &
		$\PSI \equiv \RA \land \TransVis \land \NoConflict$ \\ \midrule
		$\SI \equiv \RA \land \Prefixaxiom \land \NoConflict$ &
		$\SER \equiv \RA \land \TotalVis$ \\
		\bottomrule
	\end{tabular}%
\end{table}

Homogeneous isolation settings correspond to the special case where all transactions are assigned the same isolation level.
In this section, we prove that 
\ourframework generalizes the framework of~\cite{Framework:CONCUR2015}. 
Tables~\ref{table:il-axioms} and~\ref{table:il-isolevel}
summarize the  isolation levels
and their corresponding consistency axioms defined in that framework. 


\begin{theorem}
	\label{thm:mil-il}
	For any history $\H = (\T, \SO, \level)$
	and traditional isolation level $\il$,
	$(\T, \SO) \models \il$ iff
	$\H$ is consistent under mixed isolation level
	where $\forall T \in \T.\; \level(T) = \il$.
\end{theorem}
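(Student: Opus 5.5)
The plan is to show that, for a fixed pair of relations $(\VIS,\AR)$, the global axioms of Table~\ref{table:il-axioms} hold on $(\T,\SO,\VIS,\AR)$ exactly when the per-transaction axioms of Figure~\ref{fig:axioms-mil} hold for every $T\in\T$. Both directions of the theorem then follow by reusing the same witnesses. If $(\T,\SO)\models\il$ is witnessed by $(\VIS,\AR)$, then $(\T,\SO,\level,\VIS,\AR)$ with $\level\equiv\il$ is the mixed abstract execution. Conversely, a consistent mixed execution yields $(\T,\SO,\VIS,\AR)$ by forgetting $\level$. The well-formedness conditions ($\VIS$ acyclic, $\AR$ a strict total order, $\VIS\subseteq\AR$) are identical in the two frameworks. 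Because every transaction carries the same level $\il$, and both Table~\ref{table:il-isolevel} and Figure~\ref{fig:isolevel} define $\il$ as the same conjunction of axiom names, it suffices to prove the equivalence axiom by axiom. I will write $\forall T.\,A(T)$ for the per-transaction version of axiom $A$.

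\textbf{Straightforward axioms.} For $\Int$ and $\Ext$ the global formulas are literally $\forall T.\,\Int(T)$ and $\forall T.\,\Ext(T)$. For the remaining inclusions I unfold relational composition and quantify over the target transaction $T$.
\begin{itemize}
\item $\SO\subseteq\VIS$ iff $\forall T.\,\Session(T)$.
\item $\VIS\comp\VIS\subseteq\VIS$ iff $\forall T.\,\TransVis(T)$, since $(T',T)\in\VIS\comp\VIS$ means $T'\rel{\VIS}S\rel{\VIS}T$ for some $S$.
\item $\AR\comp\VIS\subseteq\VIS$ iff $\forall T.\,\Prefixaxiom(T)$.
\item $\AR\subseteq\VIS$ iff $\forall T.\,\TotalVis(T)$.
\end{itemize}
Each of these is a one-line rewriting.

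\textbf{Main obstacle: $\NoConflict$.} This is the only axiom whose shape differs between the frameworks. The global version is symmetric: $T\wwconflict T'\implies T\rel{\VIS}T'\lor T'\rel{\VIS}T$. The per-transaction version is oriented by $\AR$. I read $\wwconflict$ as relating distinct transactions, as intended in~\cite{Framework:CONCUR2015}. Otherwise the global axiom would force $T\rel{\VIS}T$ for every writing transaction, which contradicts acyclicity.
\begin{itemize}
\item \emph{Global implies per-transaction.} Suppose $T'\wwconflict T$ and $T'\rel{\AR}T$. The global axiom gives $T'\rel{\VIS}T$ or $T\rel{\VIS}T'$. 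The second disjunct, together with $\VIS\subseteq\AR$, would give $T\rel{\AR}T'$, contradicting irreflexivity and transitivity of $\AR$. Hence $T'\rel{\VIS}T$.
\item \emph{Per-transaction implies global.} Given distinct $T\wwconflict T'$, totality of $\AR$ yields $T\rel{\AR}T'$ or $T'\rel{\AR}T$. Applying $\NoConflict$ to the $\AR$-later transaction gives the required $\VIS$ edge in one direction.
\end{itemize}

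This is the one step that genuinely uses the abstract-execution constraints ($\AR$ total and $\VIS\subseteq\AR$), so I will state it carefully. With it, the conjunctions defining each $\il\in\IL$ coincide and the theorem follows.
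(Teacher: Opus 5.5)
Your proposal is correct and matches the paper's proof. The paper also reduces the theorem to an axiom-by-axiom comparison and treats every axiom other than $\NoConflict$ as a direct correspondence. It then proves the $\NoConflict$ equivalence in both directions exactly as you do: $\VIS \subseteq \AR$ for the global-to-per-transaction direction, and totality of $\AR$ (applying the axiom to the $\AR$-later transaction) for the converse. Your explicit assumption that $\wwconflict$ relates only distinct transactions is a worthwhile addition, since the paper's ``without loss of generality $T' \rel{\AR} T$'' step silently relies on it too.
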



\begin{proof}
	\label{proof:mil-il}
	We only need to show that for any abstract execution
	$\AE = (\T, \SO, \level, \VIS, \AR)$,
	$\AE' \triangleq (\T, \SO, \VIS, \AR) \models \NoConflict$
	if and only if $\forall T \in \T.\; \AE \models \NoConflict(T)$
	under the condition that
	$\level(T) \in \set{\PSI, \SI, \SER}$ for all $T \in \T$.
	Other consistency axioms correspond directly
	between mixed isolation levels and traditional isolation levels
	and thus the proof is omitted here.
	\begin{itemize}
		\item The ``$\implies$'' direction.
		  Suppose that $\AE' \models \NoConflict$.
		  Consider any transactions $T, T' \in \T$
			such that $T' \wwconflict T$ and $\NoConflict(T)$ holds for $T$.
			Now suppose that $T' \rel{\AR} T$.
			We need to show that $T' \rel{\VIS} T$ holds.
			Suppose by contradiction that
			\[
				\lnot(T' \rel{\VIS} T).
			\]
		  Since $\AE' \models \NoConflict$, we have
			\[
				T' \rel{\VIS} T \lor T \rel{\VIS} T'.
			\]
			Therefore, we have
			\[
				T \rel{\VIS} T'.
			\]
			Since $\VIS \subseteq \AR$, we have
			\[
				T \rel{\AR} T'.
			\]
			This contradicts the assumption that $T' \rel{\AR} T$.
			Hence, $T' \rel{\VIS} T$ holds,
			and thus $\AE \models \NoConflict(T)$.
		\item The ``$\impliedby$'' direction.
		  Suppose that $\AE \models \NoConflict(T)$ for all $T \in \T$
			with $\level(T) \in \set{\PSI, \SI, \SER}$.
		  Consider any transactions $T, T' \in \T$
			such that $T' \wwconflict T$.
			We need to show that
			\[
				T' \rel{\VIS} T \lor T \rel{\VIS} T'.
			\]
			Suppose by contradiction that
			\[
				\lnot(T' \rel{\VIS} T) \land \lnot(T \rel{\VIS} T').
			\]
		  Since $\forall T \in \T.\; \AE \models \NoConflict(T)$,
			we have that
			\[
				\AE \models \NoConflict(T) \land \AE \models \NoConflict(T').
			\]
			Suppose without loss of generality that $T' \rel{\AR} T$.
			Then, by $\AE \models \NoConflict(T)$, we have
			\[
				T' \rel{\VIS} T,
			\]
			which contradicts the assumption.
			Hence, $T' \rel{\VIS} T \lor T \rel{\VIS} T'$ holds,
			and thus $\AE' \models \NoConflict$.
	\end{itemize}
\end{proof}



\section{Equivalence between \ourframework and the Framework of~\cite{ComplexityMIL:CAV2025} }
\label{app-mil-cav-equiv}

Recently, Bouajjani et al.~\cite{ComplexityMIL:CAV2025}
 proposed an axiomatic semantics for mixed isolation levels.
To validate our \ourframework, 
we  establish its equivalence to that semantics 
on their common isolation levels, 
namely $\RA$, $\PC$, $\SI$, and $\SER$.

\begin{definition}[Histories in~\cite{ComplexityMIL:CAV2025}]
	\label{def:cav-history}
	A \bfit{history} $\h = (\T, \SO, \level, \WR)$
	is a set $\T$ of transactions
	along with a strict partial session order $\SO \subseteq \T \times \T$,
	an allocation function $\level: \T \to \IL$,
	and a write-read relation $\WR: \Key \to 2^{\T \times \T}$ such that
	($\exists!$ means ``unique existence''):
	\begin{itemize}
		\item $\forall x \in \Key, S \in \T.\;
					S \vdash \R(x, \_) \implies
						\exists!\; T \in \T.\; T \rel{\WR(x)} S$.
		\item $\forall x \in \Key.\; \forall T, S \in \T.\;
					T \rel{\WR(x)} S \implies
					\exists v \in \Val.\; T \neq S \land
					T \vdash \W(x, v) \land
					S \vdash \W(x, v)$.
	  \item $(\SO \cup \WR)$ is acyclic.
	\end{itemize}
\end{definition}

For notational simplicity, Bouajjani et al.'s framework
makes the following standard assumptions about histories.
\begin{itemize}
	\item \OneWrite: Each transaction contains at most one write operation per key.
	\item \RYW: A read operation preceded by write operations on the same key
	  returns the value written by the last preceding write to this key in the transaction.
	\item \InitTran: Each history contains a special transaction
		that writes initial value $\init$ to all keys.
		This transaction precedes all the other transactions in $\SO$.
\end{itemize}

\begin{definition}[Abstract Executions in~\cite{ComplexityMIL:CAV2025}]
	\label{def:cav-ae}
	An \bfit{abstract execution}
	$\aecav = (\T, \SO, \level, \WR, \CO)$
	is a history $\h = (\T, \SO, \level, \WR)$
	along with a strict total order $\CO \subseteq \T \times \T$
	called commit order such that $(\SO \cup \WR) \subseteq \CO$.
\end{definition}


\begin{figure*}[t]
	\centering
	\begin{minipage}{\textwidth}
		\begin{align*}
			\ReadAtomic(T) \equiv\;
					&\forall x \in \Key.\; \forall T_1, T_2 \in \T.\; \\
						&\quad T_1 \neq T_2 \land T_1 \rel{\WR(x)} T \land
						T_{2} \in \WriteTx_{x} \land \\
						&\quad T_2 \rel{\SO \;\cup\; \WR} T
							\implies T_2 \rel{\CO} T_{1}. \\
			\Prefix(T) \equiv\;
				&\forall x \in \Key.\; \forall T_1, T_2 \in \T.\; \\
					&\quad T_1 \neq T_2 \land T_1 \rel{\WR(x)} T
					\land T_{2} \in \WriteTx_{x} \land \\
				  &\quad T_2 \rel{\CO^{\ast} \comp (\SO \;\cup\; \WR)} T
					  \implies T_2 \rel{\CO} T_1. \\
			\Conflict(T) \equiv\;
				&\forall x \in \Key.\; \forall T_1, T_2, T_{3} \in \T.\; \\
					&\quad T_1 \neq T_2 \land T_1 \rel{\WR(x)} T \land
					T_{2} \in \WriteTx_{x} \land \\
					&\quad T_{3} \wwconflict T \land
						T_2 \rel{\CO^{\ast}} T_3 \rel{\CO} T
						\implies T_2 \rel{\CO} T_1. \\
			\SnapshotIsolation(T) \equiv\; & \Prefix(T) \land \Conflict(T). \\
			\Serializability(T) \equiv\;
				&\forall x \in \Key.\; \forall T_1, T_2 \in \T.\;
					T_1 \neq T_2 \land T_1 \rel{\WR(x)} T
					\land T_{2} \in \WriteTx_{x} \land
					T_2 \rel{\CO} T \implies T_2 \rel{\CO} T_1.
		\end{align*}
	\end{minipage}
	\caption{Consistency axioms for individual transactions
	  in the framework of Bouajjani et al.~\cite{ComplexityMIL:CAV2025}.}
	\label{fig:cav-axioms}
\end{figure*}

\begin{definition}[Consistency Axioms for Individual Transactions in~\cite{ComplexityMIL:CAV2025}]
	\label{def:cav-axioms}
	Let $\aecav = (\T, \SO, \level, \WR, \CO)$ be an abstract execution
	and $T \in \T$ be a transaction.
	The \bfit{consistency axioms} for individual transactions
	are defined as in Figure~\ref{fig:cav-axioms}.
\end{definition}

Read Atomic ($\ReadAtomic(T)$), Prefix ($\Prefix(T)$),
and Serializability ($\Serializability(T)$) axioms
are defined using their homonymous axioms,
and Snapshot Isolation ($\SnapshotIsolation(T)$) is defined
as a conjunction of $\Prefix(T)$ and $\Conflict(T)$.
Note that the framework of Bouajjani et al.
does not include the $\Causal(T)$ axiom and,
consequently, does not capture the $\CC$ or $\PSI$ isolation levels.
Therefore, in the following equivalence theorem,
we only consider the isolation levels in $\set{\RA, \PC, \SI, \SER}$.

The notion of ``consistent abstract executions''
and ``consistent histories'' are defined
in the same way as in our framework.

\begin{definition}[Consistent Histories in~\cite{ComplexityMIL:CAV2025}]
	\label{def:cav-consistent}
	An {\it abstract execution}
	$\aecav = (\T, \SO, \level, \WR, \CO)$ is called \bfit{consistent}
	if for each transaction $T \in \T$,
	the consistency axioms corresponding to its isolation level
	$\level(T)$ hold on $T$.

	A {\it history} $\h$ is called \bfit{consistent} if
	there exists a consistent abstract execution
	$\aecav = (\h, \WR, \CO)$ for $\h$.
\end{definition}

The equivalence between our framework
and that of Bouajjani et al.~\cite{ComplexityMIL:CAV2025}
is established as follows.


\begin{theorem}
	\label{thm:framework-equiv}
	Let $\h = (\T, \SO, \level, \WR)$ be a history
	in Bouajjani et al.'s framework
	and $\H = (\T, \SO, \level)$ be the corresponding history
	in our framework.
	Then, $\h$ is consistent iff $\H$ is consistent.
\end{theorem}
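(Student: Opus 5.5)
The plan is to prove the two directions separately. In both directions the two orders are identified, $\AR = \CO$. Visibility is matched with the write--read relation through the fact that $\WR$ and the $\AR$-maximal visible writer name the same transaction. I rely on the standing assumptions \OneWrite, \RYW{} and \InitTran. I also rely on the convention that the writer of an external read is determined by the read value: for the given $\WR$ of $\h$, the value read by $S$ from $x$ is produced by exactly the transaction $T$ with $T \rel{\WR(x)} S$. This matching of $\WR$ with the value-level axiom $\Ext$ is the point I expect to need the most care. If the convention is not available, I would restate the theorem with $\H$ augmented by $\WR$, i.e.\ require that the witnessing $\VIS$ makes $\max_{\AR}(\VIS^{-1}(S)\cap\WriteTx_x)$ the $\WR(x)$-predecessor of $S$. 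Internal reads are handled by \RYW{} on one side and $\Int$ on the other, so below I only treat external reads.

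\textbf{From $\H$ consistent to $\h$ consistent.} Let $(\VIS,\AR)$ be a consistent abstract execution for $\H$, and set $\CO \triangleq \AR$.
\begin{itemize}[label=--]
\item \emph{$\SO\cup\WR\subseteq\CO$.} $\Session$ gives $\SO\subseteq\VIS\subseteq\AR$. $\Ext$ places the $\WR$-writer of each read in $\VIS^{-1}(S)\subseteq\AR^{-1}(S)$.
\item \emph{$\ReadAtomic(T)$.} Take $T_1 \rel{\WR(x)} T$ and $T_2\in\WriteTx_x$ with $T_2 \rel{\SO\cup\WR} T$ and $T_2\neq T_1$. Then $T_2\in\VIS^{-1}(T)$, by $\Session$ or by $\Ext$ applied to the read on another key. $T_1$ is the $\AR$-maximal visible writer of $x$, so $T_2\rel{\CO}T_1$.
\item \emph{$\Prefix(T)$.} A chain $T_2 \rel{\CO^*} S \rel{\SO\cup\WR} T$ gives $S\in\VIS^{-1}(T)$, and $\Prefixaxiom(T)$ then gives $T_2\in\VIS^{-1}(T)$. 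Conclude as in the previous item.
\item \emph{$\Conflict(T)$.} From $T_3\wwconflict T$ and $T_3\rel{\AR}T$, $\NoConflict(T)$ gives $T_3\rel{\VIS}T$. Then $\Prefixaxiom(T)$ gives $T_2\rel{\VIS}T$, and we conclude as before.
\item \emph{$\Serializability(T)$.} This follows the same way from $\TotalVis(T)$.
\end{itemize}

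\textbf{From $\h$ consistent to $\H$ consistent.} Let $\CO$ witness consistency of $\h$, and set $\AR\triangleq\CO$. Visibility is defined per transaction, according to its level:
\begin{itemize}[label=--]
\item for $\RA$: $\VIS^{-1}(T)\triangleq(\SO\cup\WR)^{-1}(T)$;
\item for $\PC$: $\VIS^{-1}(T)\triangleq(\CO^*\comp(\SO\cup\WR))^{-1}(T)$;
\item for $\SI$: the $\PC$ set together with $(\CO^*)^{-1}(T_3)$ for every $T_3\wwconflict T$ with $T_3\rel{\CO}T$;
\item for $\SER$: $\VIS^{-1}(T)\triangleq\CO^{-1}(T)$.
\end{itemize}
Each of these sets lies below $T$ in $\CO$, so $\VIS\subseteq\AR$ and $\VIS$ is acyclic.

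The axioms are then checked per transaction. This is legitimate because consistency in \ourframework{} constrains only each transaction's own visible set, which is exactly why the piecewise definition works.
\begin{itemize}[label=--]
\item \emph{$\Session$.} It holds because $\SO^{-1}(T)$ is contained in every set above.
\item \emph{$\Prefixaxiom$ for $\PC$ and $\SI$.} The sets are closed downward under $\CO$ by the $\CO^*$ prefix.
\item \emph{$\NoConflict$ for $\SI$.} Conflicting $\CO$-predecessors were added explicitly.
\item \emph{$\TotalVis$ for $\SER$.} This is immediate from the definition.
\item \emph{$\Ext$.} Let $T_1 \rel{\WR(x)} T$. Then $T_1$ is visible to $T$ in every case. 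Every other visible writer $T_2$ of $x$ satisfies the hypothesis of the level's axiom in Figure~\ref{fig:cav-axioms}. For $\SI$, a $T_2$ added through a conflicting $T_3$ matches the premise of $\Conflict(T)$, and a $T_2$ in the $\PC$ part matches the premise of $\Prefix(T)$. Hence $T_2\rel{\CO}T_1$, so $T_1$ is the $\AR$-maximum and $\Ext$ holds.
\end{itemize}

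The main obstacle is the $\Ext$ correspondence, in both directions. In particular one must check that each disjunct of the $\SI$ visible set falls under the premise of $\Prefix$ or of $\Conflict$, and that the written value identifies the $\WR$-writer.
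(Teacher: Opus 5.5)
Your construction is the paper's: $\AR=\CO$ with the per-level $\VIS$ of Figure~\ref{fig:vis-cav} in one direction, $\CO=\AR$ in the other, and the same axiom-by-axiom checks. Where you deviate, you are more careful than the paper.

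For $\Ext$ at $\PC$, $\SI$ and $\SER$, the paper only says ``similar reasoning as for $\RA$''. You instead note that the two disjuncts of the $\SI$ visible set fall under $\Prefix(T)$ and $\Conflict(T)$ respectively.

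More importantly, in the direction from $\H$ to $\h$ the paper never uses the given $\WR$. It \emph{defines} $T' \rel{\WR(x)} T$ iff $T'=\max_{\AR}(\VIS^{-1}(T)\cap\WriteTx_{x})$, which silently replaces the history $\h$. Your caveat is therefore needed, because for a fixed $\WR$ this direction fails when values repeat:
\begin{itemize}
\item Take $\RA$-transactions $T_1 \rel{\SO} T_2 \rel{\SO} T_3$, where $T_1$ and $T_2$ both write $1$ to $x$, and $T_3$ reads $1$ with $T_1 \rel{\WR(x)} T_3$.
\item $\ReadAtomic(T_3)$ forces $T_2 \rel{\CO} T_1$, which contradicts $\SO\subseteq\CO$.
\item Yet $\H$ is consistent, taking $\VIS=\SO$.
\end{itemize}
Your unique-writer convention, or your restatement with $\WR$ attached to $\H$, is what makes this direction sound. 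This holds for the paper's argument as well as yours.

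One step is incomplete as written. In the direction from $\h$ to $\H$, $\Int(T)$ also constrains a read of $x$ whose last $\po$-preceding operation on $x$ is another read, with no write to $x$ earlier in $T$. $\RYW$ says nothing about this case. You need that both reads take their value from the same $\WR(x)$-writer. This follows from your convention, or, as the paper argues, from $\ReadAtomic(T)$ forbidding two distinct $\WR(x)$-writers of $T$. $\OneWrite$ for that writer then makes the two values equal.
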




\begin{proof}[Proof of Theorem~\ref{thm:framework-equiv}]
	Suppose that $h = (\T, \SO, \level, \WR)$
	is a history in Bouajjani et al.'s framework,
	and $\H = (\T, \SO, \level)$ is the corresponding history in our framework.


\begin{figure*}[t]
	\centering
	\begin{minipage}{1.0\textwidth}
		\begin{align}
			\forall S, &T \in \T.\; S \rel{\VIS} T \iff \nonumber \\
				& (\level(T) = \RA \land S \rel{\SO \;\cup\; \WR} T) \;\lor \\
				& (\level(T) = \PC \land S \rel{\CO^{\ast} \comp (\SO \;\cup\; \WR)} T) \;\lor \\
				& (\level(T) = \SI \land S \rel{
					\CO^{\ast} \comp (\SO \;\cup\; \WR \;\cup\; \set{
						(T_{1}, T_{2}) \;\mid\; T_{1} \;\rel{\CO}\; T_{2} \;\land\;
							T_{1} \;\wwconflict\; T_{2}})} T) \;\lor \\
				& (\level(T) = \SER \land S \rel{\CO} T).
		\end{align}
	\end{minipage}
	\caption{Definition of $\VIS$ in terms of $\SO$, $\WR$, and $\CO$
	  in the framework of Bouajjani et al.~\cite{ComplexityMIL:CAV2025}.}
	\label{fig:vis-cav}
\end{figure*}


	On the one hand, we show that if $\h$ is consistent, then $\H$ is consistent.
	Since $\h$ is consistent, by Definition~\ref{def:cav-consistent},
	there exists an abstract execution
	$\aecav = (\T, \SO, \level, \WR, \CO)$
	such that for every transaction $T \in \T$,
	the consistency axioms corresponding to its isolation level
	$\level(T)$ hold on $T$.
	Now,
	we need to construct an abstract execution
	$\AE = (\H, \VIS, \AR)$
	such that for every transaction $T \in \T$,
	the consistency axioms corresponding to its isolation level
	$\level(T)$ in our framework hold on $T$.
	To this end, we define $\AR = \CO$ and $\VIS$
	as in Figure~\ref{fig:vis-cav}.

	For $\AE$ to be a valid abstract execution,
	we need to show that
	\begin{itemize}
		\item $\AR$ is a strict total order.
		  This holds since $\AR = \CO$,
			which is a strict total order by Definition~\ref{def:cav-ae}.
		\item $\VIS \subseteq \AR$.
			We distinguish cases based on the isolation level of $T$
			and its construction of $\VIS$,
			and show that $\VIS \subseteq \AR$ holds in each case.
			\begin{itemize}
				\item \case{} $\level(T) = \RA$.
				  By the construction of $\VIS$ for $\level(T) = \RA$ in Figure~\ref{fig:vis-cav},
					we need to show that
					\[
						(\SO \cup \WR) \subseteq \AR.
					\]
					This holds due to $(\SO \cup \WR) \subseteq \CO$
					(Definition~\ref{def:cav-ae})
					and the fact that $\AR = \CO$.
				\item \case{} $\level(T) = \PC$.
				  By the construction of $\VIS$ for $\level(T) = \PC$
					in Figure~\ref{fig:vis-cav},
					we need to show that
					\[
						\CO^{\ast} \comp (\SO \cup \WR) \subseteq \AR.
					\]
					This holds since $(\SO \cup \WR) \subseteq \AR$
					by \case{} $\level(T) = \RA$
					and $\AR = \CO$ is a strict total order.
				\item \case{} $\level(T) = \SI$.
				  By the construction of $\VIS$ for $\level(T) = \SI$ in Figure~\ref{fig:vis-cav},
					we need to show that
										\[
												\CO^{\ast} \comp (\SO \cup \WR \cup
													\set{(T_{1}, T_{2}) \mid T_{1} \rel{\CO} T_{2} \land
													T_{1} \wwconflict T_{2}}) \subseteq \AR.
										\]
					This holds since $(\SO \cup \WR) \subseteq \AR$
					by \case{} $\level(T) = \RA$
					and $\AR = \CO$ is a strict total order.
				\item \case{} $\level(T) = \SER$.
				  By the construction of $\VIS$ for $\level(T) = \SER$ in Figure~\ref{fig:vis-cav},
					we need to show that
					\[
						\CO \subseteq \AR.
					\]
					This holds since $\AR = \CO$.
			\end{itemize}
		\item $\VIS$ is acyclic.
		  This holds since $\VIS \subseteq \AR$ and $\AR$ is acyclic.
	\end{itemize}

	In the following, we show that for every transaction $T \in \T$,
	the consistency axioms of our framework
	corresponding to its isolation level $\level(T)$ hold on $T$.

	\begin{itemize}
		\item \case{} $\level(T) = \RA$.
		  Since $\aecav$ is consistent, $\ReadAtomic(T)$ holds.
			\begin{itemize}
				\item $\Int(T)$.
				  Consider any internal read $r = \R(x, v)$ in $T$
					for some $x$ and $v$.
					If $r$ does not exists, $\Int(T)$ holds trivially.
					Otherwise, let $o \triangleq \max_{\po}(\po_{x}^{-1}(r))$
					be the last operation on $x$ before $r$ in $T$.
					We show that $o = \_(x, v)$,
					by distinguishing two cases depending on the type of $o$:
					\begin{itemize}
						\item $o$ is a write operation.
						  By the $\RYW$ assumption about histories,
							$r$ reads the value written by $o$.
							Thus, $o = \W(x, v)$.
						\item $o$ is a read operation.
						  By the choise of $o$, there is no write operation on $x$
							between $o$ and $r$ in $T$.
							We then distinguish two sub-cases depending on
							whether there is a write operation on $x$ in $T$ {\it before} $o$.
							First, suppose there is such a write operation,
							then by $\OneWrite$ and $\RYW$,
							both $o$ and $r$ read the value written by that write operation,
							Since $r = \R(x, v)$, we have $o = \R(x, v)$ as well.
							Second, suppose there is no such a write operation,
						  we show that both $o$ and $r$ must read from
							the same transaction.
							Suppose not, i.e., there exists transactions $T_{1} \neq T_{2}$
							such that $T_{1} \rel{\WR(x)} T$ and $T_{2} \rel{\WR(x)} T$.
							By $\ReadAtomic(T)$, we have
							$T_{1} \rel{\CO} T_{2} \land T_{2} \rel{\CO} T_{1}$.
							Therefore, $\CO$ is cyclic,
							contradicting the fact that $\CO$ is a strict total order.
							Thus, there exists a unique transaction $T'$ such that
							both $o$ and $r$ read from $T'$.
							By $\OneWrite$ for $T'$ and $r = \R(x, v)$,
							we have $o = \R(x, v)$.
					\end{itemize}
				\item $\Ext(T)$.
				  Consider any external read $r = \R(x, v)$ in $T$
					for some $x$ and $v$.
					If $r$ does not exists, $\Ext(T)$ holds trivially.
					Otherwise, let $T_{1}$ be the unique transaction
					such that
					\[
						T_{1} \vdash \W(x, v) \land T_{1} \rel{\WR(x)} T.
					\]
					Hence, by the construction of $\VIS$ for $\level(T) = \RA$,
					\[
						T_{1} \in (\VIS^{-1}(T) \cap \WriteTx_{x}).
					\]
					Consider any $T_{2} \in \T$ such that
					\[
						T_{1} \neq T_{2} \land T_{2} \in (\VIS^{-1}(T) \cap \WriteTx_{x}).
					\]
					That is,
					\[
						T_{1} \neq T_{2} \land T_{2} \in \WriteTx_{x} \land
						  T_{2} \rel{\SO \;\cup\; \WR} T.
					\]
					By $\ReadAtomic(T)$, we have
					\[
						T_{2} \rel{\CO} T_{1}.
					\]
					That is,
					\[
						T_{2} \rel{\AR} T_{1}.
					\]
					Therefore,
					\[
						T_{1} = \max_{\AR}(\VIS^{-1}(T) \cap \WriteTx_{x}).
					\]
					Hence, $\Ext(T)$ holds.
				\item $\Session(T)$.
				  Consider any $T' \in \T$ such that $T' \rel{\SO} T$.
					By construction of $\VIS$ for $\level(T) = \RA$,
					we have $T' \rel{\VIS} T$.
			\end{itemize}
		\item \case{} $\level(T) = \PC$.
			$\Int(T)$, $\Ext(T)$, and $\Session(T)$ hold
			by similar reasoning as in the case of $\level(T) = \RA$.
			In the following, we show that $\Prefixaxiom(T)$ holds.
			Consider any $T_{1}, T_{2} \in \T$ such that
			\[
				T_{1} \rel{\AR} T_{2} \rel{\VIS} T.
			\]
			By the construction of $\VIS$ and $\AR$ for $\level(T) = \PC$,
			\[
				T_{1} \rel{\CO} T_{2} \rel{\CO^{\ast} \comp (\SO \;\cup\; \WR)} T.
			\]
			That is,
			\[
				\exists T_{3} \in \T.\;
				T_{1} \rel{\CO} T_{2} \rel{\CO^{\ast}} T_{3} \rel{\SO \;\cup\; \WR} T.
			\]
			Since $\CO$ is transitive, we have
			\[
				T_{1} \rel{\CO} T_{3} \rel{\SO \;\cup\; \WR} T.
			\]
			Hence, by the construction of $\VIS$ for $\level(T) = \PC$,
			\[
				T_{1} \rel{\VIS} T.
			\]
			Therefore, $\Prefixaxiom(T)$ holds.
		\item \case{} $\level(T) = \SI$.
			$\Int(T)$, $\Ext(T)$, $\Session(T)$, and $\Prefixaxiom(T)$ hold
			by similar reasoning as in the case of $\level(T) = \PC$.
			In the following, we show that $\NoConflict(T)$ holds.
			Consider any $T' \in \T$ such that
			\[
				T' \wwconflict T \land T' \rel{\AR} T.
			\]
			By the construction of $\AR$ for $\level(T) = \SI$,
			\[
				T' \wwconflict T \land T' \rel{\CO} T.
			\]
			That is,
			\[
				T' \rel{\set{(T_{1}, T_{2}) \;\mid\;
				  T_{1} \;\rel{\CO}\; T_{2} \;\land\; T_{1} \;\wwconflict\; T_{2}}} T.
			\]
			By the construction of $\VIS$ for $\level(T) = \SI$,
			\[
				T' \rel{\VIS} T.
			\]
			Therefore, $\NoConflict(T)$ holds.
		\item \case{} $\level(T) = \SER$.
			$\Int(T)$, $\Ext(T)$, and $\Session(T)$ hold
			by similar reasoning as in the case of $\level(T) = \RA$.
			In the following, we show that $\TotalVis(T)$ holds.
			Consider any $T' \in \T$ such that
			\[
				T' \rel{\AR} T.
			\]
			By the construction of $\AR$ for $\level(T) = \SER$,
			\[
				T' \rel{\CO} T.
			\]
			By the construction of $\VIS$ for $\level(T) = \SER$,
			\[
				T' \rel{\VIS} T.
			\]
			Therefore, $\TotalVis(T)$ holds.
	\end{itemize}

	On the other hand, we show that if $\H$ is consistent,
	then $\h$ is consistent.
	Since $\H$ is consistent, 
	there exists an abstract execution $\AE = (\H, \VIS, \AR)$
	such that for every transaction $T \in \T$,
	the consistency axioms corresponding to its isolation level
	$\level(T)$ in our framework hold on $T$.
	By Definition~\ref{def:cav-consistent},
	we need to construct an abstract execution
	$\aecav = (\T, \SO, \level, \WR, \CO)$
	such that for every transaction $T \in \T$,
	the consistency axioms corresponding to its isolation level
	$\level(T)$ in Bouajjani et al.'s framework hold on $T$.
	To this end, we define $\CO = \AR$, which is a strict total order.

	By $\Ext(T)$, we define $\WR(x)$ like:
	\[
		T' \rel{\WR(x)} T \iff T' = \max_{\AR}(\VIS^{-1}(T) \cap \WriteTx_{x}).
	\]
	and then $\WR = \bigcup_{x \in \Key} \WR(x)$.

	For $\aecav$ to be a valid abstract execution,
	we need to show that
	\[
		(\SO \cup \WR) \subseteq \CO.
	\]
	First, by the definition of $\WR(x)$ above, we have
	\[
		\WR \subseteq \VIS.
	\]
	Since $\VIS \subseteq \AR$ and $\CO = \AR$, we have
	\[
		\WR \subseteq \CO.
	\]
	Since $\ae$ is an abstract execution and
	all consistency levels we consider enforces $\SO(T)$, we have
	\[
		\SO \subseteq \AR = \CO.
	\]
	Therefore, we have
	\[
		(\SO \cup \WR) \subseteq \CO.
	\]
	In the following, we show that for every transaction $T \in \T$,
	the consistency axioms of the framework of Bouajjani et al.
	corresponding to its isolation level $\level(T)$ hold on $T$.
	\begin{itemize}
		\item \case{} $\level(T) = \RA$.
		  Since $\H$ is consistent,
			$\Int(T)$, $\Ext(T)$, and $\Session(T)$ hold.
			We need to show that $\ReadAtomic(T)$ holds.
			Consider any $T_1, T_2 \in \T$ such that for some $x$,
			\[
				T_1 \neq T_2 \land T_1 \rel{\WR(x)} T \land
				  T_2 \in \WriteTx_{x} \land T_2 \rel{\SO \;\cup\; \WR} T.
			\]
			{By $\Ext(T)$ and $T_{1} \rel{\WR(x)} T$, we have}
			\[
				T_1 = \max_{\AR}(\VIS^{-1}(T) \cap \WriteTx_{x}).
			\]
			By $\Session(T)$, $\Ext(T)$,
			and $T_2 \rel{\SO \;\cup\; \WR} T$, we have
			\[
				T_2 \rel{\VIS} T.
			\]
			Since $T_2 \in \WriteTx_{x}$,
			\[
				T_2 \in (\VIS^{-1}(T) \cap \WriteTx_{x}).
			\]
			Thus,
			\[
				T_2 \rel{\AR} T_1.
			\]
			Since $\CO = \AR$,
			\[
				T_2 \rel{\CO} T_1.
			\]
			Thus, $\ReadAtomic(T)$ holds.
		\item \case{} $\level(T) = \PC$.
		  Since $\H$ is consistent, $\Int(T)$, $\Ext(T)$, $\Session(T)$,
			and $\Prefixaxiom(T)$ hold.
			We need to show that $\Prefix(T)$ holds.
			Consider any $T_1, T_2 \in \T$ such that for some $x$,
			\[
				T_1 \neq T_2 \land T_1 \rel{\WR(x)} T \land
				  T_2 \in \WriteTx_{x} \land
				  \exists T_{3}.\;
					  T_2 \rel{\CO^{\ast}} T_{3} \rel{\SO \;\cup\; \WR} T.
			\]
			{By $\Ext(T)$ and $T_{1} \rel{\WR(x)} T$, we have}
			\[
				T_1 = \max_{\AR}(\VIS^{-1}(T) \cap \WriteTx_{x}).
			\]
			By $\Session(T)$, $\Ext(T)$,
			and $T_{3} \rel{\SO \;\cup\; \WR} T$,
			\[
				T_{3} \rel{\VIS} T.
			\]
			By $\Prefixaxiom(T)$ and $\CO = \AR$,
			\[
				T_{2} \rel{\CO} T_{3} \rel{\VIS} T
				  \implies T_{2} \rel{\VIS} T.
			\]
			Therefore,
			\[
				T_{2} \rel{\CO^{\ast}} T_{3} \rel{\VIS} T
				  \implies T_{2} \rel{\VIS} T.
			\]
			Since $T_2 \in \WriteTx_{x}$,
			\[
				T_2 \in (\VIS^{-1}(T) \cap \WriteTx_{x}).
			\]
			Therefore,
			\[
				T_2 \rel{\AR} T_1.
			\]
			Since $\CO = \AR$,
			\[
				T_2 \rel{\CO} T_1.
			\]
			Thus, $\Prefix(T)$ holds.
		\item \case{} $\level(T) = \SI$.
		  Since $\H$ is consistent, $\Int(T)$, $\Ext(T)$, $\Session(T)$,
			$\Prefixaxiom(T)$, and $\NoConflict(T)$ hold.
			We need to show that both $\Prefix(T)$ and $\Conflict(T)$ hold.
			Since $\SI(T) \equiv \PC(T) \land \NoConflict(T)$
			and we have shown that $\Prefix(T)$ holds if $\PC(T)$ holds,
			it suffices to show that $\Conflict(T)$ holds.
			Consider any $T_1, T_2, T_{3} \in \T$ such that for some $x$,
			\[
				T_1 \neq T_2 \land T_1 \rel{\WR(x)} T \land
				  T_{2} \in \WriteTx_{x} \land
				  T_{3} \wwconflict T \land
				  T_2 \rel{\CO^{\ast}} T_{3} \rel{\CO} T.
			\]
			{By $\Ext(T)$ and $T_{1} \rel{\WR(x)} T$, we have}
			\[
				T_1 = \max_{\AR}(\VIS^{-1}(T) \cap \WriteTx_{x}).
			\]
			By $\NoConflict$, $T_{3} \wwconflict T$, $T_{3} \rel{\CO} T$,
			and $\CO = \AR$,
			\[
				T_{3} \rel{\VIS} T.
			\]
			By $\Prefixaxiom(T)$ and $\CO = \AR$,
			\[
				T_{2} \rel{\CO} T_{3} \rel{\VIS} T
				  \implies T_{2} \rel{\VIS} T.
			\]
			Therefore,
			\[
				T_{2} \rel{\CO^{\ast}} T_{3} \rel{\VIS} T
				  \implies T_{2} \rel{\VIS} T.
			\]
			Since $T_2 \in \WriteTx_{x}$,
			\[
				T_2 \in (\VIS^{-1}(T) \cap \WriteTx_{x}).
			\]
			Therefore,
			\[
				T_2 \rel{\AR} T_1.
			\]
			Since $\CO = \AR$,
			\[
				T_2 \rel{\CO} T_1.
			\]
			Thus, $\Conflict(T)$ holds.
		\item \case{} $\level(T) = \SER$.
		  Since $\H$ is consistent, $\Int(T)$, $\Ext(T)$, $\Session(T)$,
			and $\TotalVis(T)$ hold.
			We need to show that $\Serializability(T)$ holds.
			Consider any $T_1, T_2 \in \T$ such that for some $x$,
			\[
				T_1 \neq T_2 \land T_1 \rel{\WR(x)} T \land
				  T_2 \in \WriteTx_{x} \land T_2 \rel{\CO} T.
			\]
			{By $\Ext(T)$ and $T_{1} \rel{\WR(x)} T$, we have}
			\[
				T_1 = \max_{\AR}(\VIS^{-1}(T) \cap \WriteTx_{x}).
			\]
			By $\TotalVis(T)$, $T_2 \rel{\CO} T$, and $\CO = \AR$,
			\[
				T_2 \rel{\VIS} T.
			\]
			Since $T_2 \in \WriteTx_{x}$,
			\[
				T_2 \in (\VIS^{-1}(T) \cap \WriteTx_{x}).
			\]
			Therefore,
			\[
				T_2 \rel{\AR} T_1.
			\]
			Since $\CO = \AR$,
			\[
				T_2 \rel{\CO} T_1.
			\]
			Thus, $\Serializability(T)$ holds.
	\end{itemize}
\end{proof}

\section{The PC--SI--SER Protocol}
\label{app-pc-si-ser}


\begin{algorithm}[t]
  \small
  \caption{The \PC-\SI-\SER{} protocol}
  \label{alg:pc-si-ser}
  \begin{algorithmic}[1]
		\Statex $\store$ and $T.\buffer$ as in Algorithm~\ref{alg:si-ser-fekete}

		\hStatex
    \Procedure{\StartProc}{$T$}
      \label{line:proc-starttxn-pcsiser}
			\State $T.\sts \gets \now$
			  \Comment{also for \SER{} transactions}
			  \label{line:starttxn-starttime-pcsiser}
    \EndProcedure

		\hStatex
    \Procedure{\WriteProc}{$T, k, v$}
      \label{line:proc-writetxn-pcsiser}
			\State $T.\buffer[k] \gets v$
			  \label{line:writetxn-buffer-pcsiser}
    \EndProcedure

		\hStatex
    \Procedure{\ReadProc}{$T, k$}
      \label{line:proc-readtxn-pcsiser}
			\If{$k \in \dom(T.\buffer)$}
				\State \Return $T.\buffer[k]$
				\label{line:readtxn-buffer-pcsiser}
			\EndIf
			\State \Return value of $\store[k]$ at latest timestamp $< T.\sts$
			\label{line:readtxn-store-pcsiser}
    \EndProcedure

		\hStatex
    \Procedure{\CommitProc}{$T$}
      \label{line:proc-committxn-pcsiser}
			\State $T.\cts \gets \now$
			  \label{line:committxn-committime-pcsiser}
			\If{$\level(T) = \SI \land
			  \exists T'.\; T'.\cts \in (T.\sts, T.\cts) \land
				T' \wwconflict T$}
			  \State \Return $\aborted$
					\label{line:committxn-si-conflict-pcsiser} \label{line:aborttxn-status-pcsiser}
			\EndIf
			\If{$\level(T) = \SER \land
			  \exists T'.\; T'.\cts \in (T.\sts, T.\cts) \land
				(T \wwconflict T' \lor T \rwconflict T')$}
			  \State \Return $\aborted$ \label{line:committxn-ser-conflict-pcsiser}
			\EndIf
			\State $\store[k] \gets [T.\cts \mapsto v], \forall [k \mapsto v] \in T.\buffer$
				\label{line:committxn-store-pcsiser}
			\State \Return $\committed$
			  \label{line:committxn-return-status-pcsiser}
    \EndProcedure
  \end{algorithmic}
\end{algorithm}

We propose a new concurrency control protocol called \PC-\SI-\SER,
which supports clients choosing among $\PC$, $\SI$, and $\SER$ isolation levels
for each transaction.
To the best of our knowledge, this is the first protocol that mixes three isolation levels
spanning from the weakest (\PC) to the strongest (\SER) in a unified framework.
Algorithm~\ref{alg:pc-si-ser} presents its pseudocode,
which adapts the well-known centralized $\SI$ protocol~\cite{Critique:SIGMOD1995,Adya:PhDThesis1999}
by extending its conflict-checking logic to distinguish among the three levels.

All three levels follow the same basic flow:
at begin, $T$ obtains a start timestamp $T.\sts$
(\code{\ref{alg:pc-si-ser}}{\ref{line:starttxn-starttime-pcsiser}});
during execution, writes are buffered in $T.\buffer$
(\code{\ref{alg:pc-si-ser}}{\ref{line:writetxn-buffer-pcsiser}}),
and reads return values from $T.\buffer$ for keys already written by $T$
(\code{\ref{alg:pc-si-ser}}{\ref{line:readtxn-buffer-pcsiser}}),
or from the snapshot of $\store$ as of $T.\sts$ for external reads
(\code{\ref{alg:pc-si-ser}}{\ref{line:readtxn-store-pcsiser}});
at commit, $T$ obtains a commit timestamp $T.\cts$ and installs its writes
(\code{\ref{alg:pc-si-ser}}{\ref{line:committxn-committime-pcsiser}}).
The levels differ only in conflict detection:
$\PC$ performs no conflict checks;
$\SI$ checks for write-write conflicts with concurrent transactions
($T'.\cts \in (T.\sts, T.\cts)$ and $T' \wwconflict T$) and aborts if any are found
(\code{\ref{alg:pc-si-ser}}{\ref{line:committxn-si-conflict-pcsiser}});
$\SER$ additionally checks for read-write conflicts
($T \rwconflict T'$), aborting if a concurrent transaction has written to any key that $T$ read
(\code{\ref{alg:pc-si-ser}}{\ref{line:committxn-ser-conflict-pcsiser}}).
This graduated scheme captures the increasing strength from $\PC$ through $\SI$ to $\SER$.


\begin{theorem}
	\label{thm:pc-si-ser}
	Algorithm~\ref{alg:pc-si-ser} conforms to the mixed isolation guarantees of \PC, \SI, and \SER.
\end{theorem}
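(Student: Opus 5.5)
We follow the two-step recipe of Section~\ref{ss-proof-method}, reusing the structure of the proof of Theorem~\ref{thm:si-ser-fekete}. Throughout, we consider only committed transactions. We assume that each \CommitProc{} executes atomically, from obtaining $T.\cts$ to installing its writes, and that $\now$ returns unique, strictly increasing timestamps.

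\textbf{Construction.} We set $T' \rel{\AR} T \iff T'.\cts < T.\cts$, which is a strict total order because commit timestamps are unique. We then define $\VIS$ by cases on $\level(T)$:
\[
T' \rel{\VIS} T \iff (\level(T) \in \set{\PC,\SI} \land T'.\cts < T.\sts) \lor (\level(T) = \SER \land T'.\cts < T.\cts).
\]
Since $T.\sts < T.\cts$, both disjuncts imply $T'.\cts < T.\cts$. Hence $\VIS \subseteq \AR$, and $\VIS$ is acyclic.

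\textbf{Weaker axioms.} $\Int(T)$ follows from the buffer check at line~\ref{line:readtxn-buffer-pcsiser}. For $\Session(T)$, $T' \rel{\SO} T$ means $T'$ committed before $T$ started, so $T'.\cts < T.\sts < T.\cts$ and $T' \rel{\VIS} T$ at every level. $\Prefixaxiom(T)$ for $\PC$ and $\SI$ is a timestamp chase: $T'.\cts < S.\cts < T.\sts$. $\NoConflict(T)$ for $\SI$ is the first-committer-wins argument already used for Theorem~\ref{thm:si-ser-fekete}. Suppose $T' \wwconflict T$ and $T'.\cts < T.\cts$. If $T'.\cts > T.\sts$, then $T'.\cts \in (T.\sts,T.\cts)$, so line~\ref{line:committxn-si-conflict-pcsiser} aborts $T$. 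Hence $T'.\cts < T.\sts$. $\Ext(T)$ for $\PC$ and $\SI$ holds because line~\ref{line:readtxn-store-pcsiser} returns the version with the largest timestamp below $T.\sts$. By atomic commit, every transaction with $\cts < T.\sts$ has already installed its writes when $T$ reads. This version is therefore written by $\max_{\AR}(\VIS^{-1}(T)\cap\WriteTx_x)$. $\TotalVis(T)$ for $\SER$ is immediate, since the construction gives $\VIS^{-1}(T)=\AR^{-1}(T)$.

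\textbf{Main obstacle: $\Ext(T)$ for $\SER$.} Here the construction and the protocol diverge. $T$ physically reads from the snapshot at $T.\sts$, but $\VIS$ makes every transaction with $\cts < T.\cts$ visible. We must show that, for each key $x$ that $T$ reads externally, no committed writer of $x$ has $\cts \in (T.\sts, T.\cts)$. This is exactly what the read-write check at line~\ref{line:committxn-ser-conflict-pcsiser} guarantees, interpreting $T \rwconflict T'$ as ``$T'$ writes a key that $T$ read''. Any such writer would cause $T$ to abort, contradicting that $T$ committed. Consequently, the $\AR$-maximal writer of $x$ with $\cts < T.\cts$ equals the $\AR$-maximal writer with $\cts < T.\sts$, which is the writer whose value $T$ returned.

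Two boundary points need care. First, a writer committing after $T$ has read, but before $T.\cts$, still lies in the checked interval, so it is caught. Second, writers with $\cts > T.\cts$ are outside $\VIS^{-1}(T)$ and are irrelevant.

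Finally, the $\SER$ write-write check is not needed for any axiom of $T$ itself. This is consistent with isolation autonomy: $\SER$ requires only $\RA \land \TotalVis$.
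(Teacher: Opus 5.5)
Your proposal is correct and follows essentially the same route as the paper: the same commit-timestamp $\AR$, the same level-dependent $\VIS$ (snapshot at $T.\sts$ for \PC{} and \SI, all commits before $T.\cts$ for \SER), and $\Ext(T)$ for \SER{} discharged by the read-write conflict check exactly as in the paper. Your refinements are minor and do not change the argument: you state the atomic-commit and unique-timestamp assumptions explicitly, and you note that $\Prefixaxiom$ and $\NoConflict$ need not be checked for \SER{} transactions, whereas the paper proves them anyway, harmlessly.
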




\begin{proof}
	For any history $\H = (\T, \SO, \level)$ produced by Algorithm~\ref{alg:pc-si-ser},
	we define $\AR$ as follows:
	\[
		\forall T', T \in \T.\;
		T' \rel{\AR} T \iff T'.\cts < T.\cts.
	\]
	Clearly, $\AR$ is a strict total order.
	The visibility relation $\VIS$ is defined as
	\begin{align*}
		\forall T', T \in \T.\; & T' \rel{\VIS} T \iff                            \\
		                        & (\level(T) = \PC \land T'.\cts < T.\sts) \;\lor \\
		                        & (\level(T) = \SI \land T'.\cts < T.\sts) \;\lor \\
		                        & (\level(T) = \SER \land T'.\cts < T.\cts).
	\end{align*}
	It is easy to verify that $\VIS$ is irreflexive
	and $\VIS \subseteq \AR$.~\footnote{
		The proof is similar to that in the proof of Theorem~\ref{thm:si-ser-fekete}.
		We omit the details for brevity.
	}
	In the following, we show that the corresponding abstract execution
	$\AE = (\H, \VIS, \AR)$ for $\H$ is consistent:
	Fix a transaction $T \in \T$.
	We show that the consistency axioms
	corresponding to $\level(T)$ hold on $T$.

	\begin{itemize}
		\item $\Int(T)$:
		      If there is no \textit{internal} read in $T$,
		      then $\Int(T)$ holds vacuously.
		      Otherwise, $\Int(T)$ holds
		      because $T$ buffers writes in its private $\buffer$
		      (\code{\ref{alg:pc-si-ser}}{\ref{line:writetxn-buffer-pcsiser}})
		      and reads from $\buffer$ for internal reads
		      (\code{\ref{alg:pc-si-ser}}{\ref{line:readtxn-buffer-pcsiser}}).
		\item $\Ext(T)$:
		      If there is no \textit{external} read in $T$,
		      then $\Ext(T)$ holds vacuously.
		      Otherwise, consider any external read $r \triangleq \R(x, v)$ in $T$
		      for some $x \in \Key$ and $v \in \Val$.
		      By \code{\ref{alg:pc-si-ser}}{\ref{line:readtxn-store-pcsiser}},
		      $v$ is the value written to $x$ by the last transaction,
		      according to $\AR$, in
		      \[
			      V(r) \triangleq \set{T' \in \T \mid T'.\cts < T.\sts}.
		      \]
		      Let $S \triangleq \max_{\AR}(\VIS^{-1}(T) \cap \WriteTx_{x})$.
		      We distinguish three cases depending on whether
		      $T$ is at $\PC$, $\SI$, or $\SER$ level,
		      and show that $S \vdash \W(x, v)$.

		      \begin{itemize}
			      \item \casei: $\level(T) = \PC$.
			            By definition of $\VIS$ for $\PC$ transactions,
			            \[
				            \VIS^{-1}(T) = \set{T' \in \T \mid T'.\cts < T.\sts}.
			            \]
			            Since $V(r) = \VIS^{-1}(T)$, by definition of $S$,
			            \[
				            S = \max_{\AR}(V(r) \cap \WriteTx_{x}).
			            \]
			            That is, $S$ is the last transaction in $V(r)$
			            that writes to $x$ according to $\AR$.
			            Therefore, $S \vdash \W(x, v)$ holds.
			      \item \caseii: $\level(T) = \SI$.
			            The proof is the same as that of \casei{} for $\PC$ transactions.
			      \item \caseiii: $\level(T) = \SER$.
			            By definition of $\VIS$ for $\SER$ transactions,
			            \[
				            \VIS^{-1}(T) = \set{T' \in \T \mid T'.\cts < T.\cts}.
			            \]
			            In the following,	we show that
			            \[
				            V(r) \cap \WriteTx_{x} = \VIS^{-1}(T) \cap \WriteTx_{x}.
			            \]
			            On the one hand, since $T.\sts < T.\cts$,
			            \[
				            V(r) \subseteq \VIS^{-1}(T).
			            \]
			            Hence,
			            \[
				            V(r) \cap \WriteTx_{x} \subseteq
				            \VIS^{-1}(T) \cap \WriteTx_{x}.
			            \]
			            On the other hand, suppose by contradiction that
			            \[
				            \VIS^{-1}(T) \cap \WriteTx_{x} \nsubseteq V(r) \cap \WriteTx_{x}.
			            \]
			            Then there exists a transaction $T' \in \T$ such that
			            \[
				            T' \in \WriteTx_{x} \land T'.\cts \in (T.\sts, T.\cts).
			            \]
			            Since $T \vdash \R(x, v)$, we have $T \rwconflict T'$.
			            By \code{\ref{alg:pc-si-ser}}{\ref{line:committxn-ser-conflict-pcsiser}},
			            $T$ would abort, which is a contradiction.
			            Therefore,
			            \[
				            \VIS^{-1}(T) \cap \WriteTx_{x} \subseteq
				            V(r) \cap \WriteTx_{x}.
			            \]
		      \end{itemize}
		\item $\Session(T)$:
		      Consider any transaction $T' \in \T$
		      such that $T' \rel{\SO} T$.
		      By definition of $\SO$, $T$ starts after $T'$ commits.
		      \begin{itemize}
			      \item \casei: $\level(T) = \PC$.
			            Since $T$ starts after $T'$ commits
			            and the wall time always increases, we have
			            \[
				            T'.\cts < T.\sts.
			            \]
			            By definition of $\VIS$ for $\PC$ transactions,
			            \[
				            T' \rel{\VIS} T.
			            \]
			      \item \caseii: $\level(T) = \SI$.
			            The proof is the same as that of \casei{} for $\PC$ transactions.
			      \item \caseiii: $\level(T) = \SER$.
			            Since $T$ starts after $T'$ commits,
			            $T$ commits after it starts,
			            and the wall time always increases, we have
			            \[
				            T'.\cts < T.\cts.
			            \]
			            By definition of $\VIS$ for $\SER$ transactions,
			            \[
				            T' \rel{\VIS} T.
			            \]
		      \end{itemize}
		\item $\Prefixaxiom(T)$:
		      We need to show that for any transactions $T', S \in \T$,
		      the following holds:
		      \[
			      T' \rel{\AR} S \rel{\VIS} T \implies T' \rel{\VIS} T.
		      \]
		      Since $T' \rel{\AR} S$, by definition of $\AR$,
		      \[
			      T'.\cts < S.\cts.
		      \]
		      In the following, we consider three cases
		      depending on whether $T$ is at $\PC$, $\SI$, or $\SER$ level.
		      \begin{itemize}
			      \item \casei: $\level(T) = \PC$.
			            Since $S \rel{\VIS} T$,
			            by definition of $\VIS$ for $\PC$ transactions,
			            \[
				            S.\cts < T.\sts.
			            \]
			            Thus,
			            \[
				            T'.\cts < S.\cts < T.\sts.
			            \]
			            Therefore, by definition of $\VIS$ for $\PC$ transactions,
			            \[
				            T' \rel{\VIS} T.
			            \]
			      \item \caseii: $\level(T) = \SI$.
			            The proof is the same as that of \casei{} for $\PC$ transactions.
			      \item \caseiii: $\level(T) = \SER$.
			            Since $S \rel{\VIS} T$,
			            by definition of $\VIS$ for $\SER$ transactions,
			            \[
				            S.\cts < T.\cts.
			            \]
			            Thus,
			            \[
				            T'.\cts < S.\cts < T.\cts.
			            \]
			            Therefore, by definition of $\VIS$ for $\SER$ transactions,
			            \[
				            T' \rel{\VIS} T.
			            \]
		      \end{itemize}
		\item $\NoConflict(T)$:
		      Note that in this case, $\level(T) = \SI$
		      or $\level(T) = \SER$.
		      Consider a transaction $T' \neq T$ such that $T \wwconflict T'$.
		      Suppose that $T' \rel{\AR} T$.
		      By definition of $\AR$,
		      \[
			      T'.\cts < T.\cts.
		      \]
		      In the following, we consider two cases
		      depending on whether $T$ is at $\SI$ or $\SER$ level,
		      and show that $T' \rel{\VIS} T$.
		      \begin{itemize}
			      \item \casei: $\level(T) = \SI$.
			            If $T.\sts < T'.\cts$, then we have
			            \[
				            T'.\cts \in (T.\sts, T.\cts) \land T' \wwconflict T.
			            \]
			            By \code{\ref{alg:pc-si-ser}}{\ref{line:committxn-si-conflict-pcsiser}},
			            $T$ should be aborted.
			            Therefore,
			            \[
				            T'.\cts < T.\sts.
			            \]
			            By definition of $\VIS$ for $\SI$ transactions,
			            \[
				            T' \rel{\VIS} T.
			            \]
			      \item \caseii: $\level(T) = \SER$.
			            By definition of $\VIS$ for $\SER$ transactions,
			            \[
				            T' \rel{\VIS} T.
			            \]
		      \end{itemize}
		\item $\TotalVis$:
		      Note that in this case, $\level(T) = \SER$.
		      Consider a transaction $T' \in \T$ such that
		      \[
			      T' \rel{\AR} T.
		      \]
		      By definition of $\AR$,
		      \[
			      T'.\cts < T.\cts.
		      \]
		      By definition of $\VIS$ for $\SER$ transactions,
		      \[
			      T' \rel{\VIS} T.
		      \]
	\end{itemize}
\end{proof}


\section{Proof of Theorem~\ref{thm:si-ser-fekete} (SI--S2PL)}
\label{app-proof-si-ser-fekete}


\begin{proof}[Proof of Theorem~\ref{thm:si-ser-fekete}]
	For any history $\H = (\T, \SO, \level)$ produced by Algorithm~\ref{alg:si-ser-fekete},
	we define $\AR$ as follows:
	\[
		\forall T', T \in \T.\;
			T' \rel{\AR} T \iff T'.\cts < T.\cts.
	\]
	Clearly, $\AR$ is a strict total order.
	The visibility relation $\VIS$ is defined as
	\begin{align*}
		\forall T', T &\in \T.\;
			T' \rel{\VIS} T \iff                                                       \\
				& (\level(T) = \SI \land T'.\cts < T.\sts) \;\lor \\
				& (\level(T) = \SER \land T'.\cts < T.\cts).
	\end{align*}
	Note that we choose $T'.\cts < T.\cts$ for $\SER$ transactions
	rather than $T'.\cts < T.\sts$ (as for $\PC$ and $\SI$ transactions;
		see \code{\ref{alg:pc-si-ser}}{\ref{line:readtxn-store-pcsiser}})
	to ensure that $\SER$ transactions observe all its $\AR$-predecessors,
	required by the $\TotalVis$ axiom.

	It is easy to verify that $\VIS$ is irreflexive
	and $\VIS \subseteq \AR$:
	\begin{itemize}
		\item $\VIS$ is irreflexive:
			For any transaction $T \in \T$,
			by definition of $\VIS$,
			$T \rel{\VIS} T$ implies either
			\[
				\level(T) = \SI \land T.\cts < T.\sts,
			\]
			or
			\[
				\level(T) = \SER \land T.\cts < T.\cts,
			\]
			which is impossible in both cases.
		\item $\VIS \subseteq \AR$:
			Consider any transactions $T', T \in \T$
			such that $T' \rel{\VIS} T$.
			By definition of $\VIS$,
			we have either
			\[
				\level(T) = \SI \land T'.\cts < T.\sts,
			\]
			or
			\[
				\level(T) = \SER \land T'.\cts < T.\cts.
			\]
			In both cases, we have (due to $T.\sts < T.\cts$)
			\[
				T'.\cts < T.\cts.
			\]
			Therefore, by definition of $\AR$,
			\[
				T' \rel{\AR} T.
			\]
	\end{itemize}
	In the following, we show that the corresponding abstract execution
	$\AE = (\H, \VIS, \AR)$ for $\H$ is consistent:
	Fix a transaction $T \in \T$.
	We show that the consistency axioms
	corresponding to $\level(T)$ hold on $T$.

	\begin{itemize}
		\item $\Int(T)$:
			If there is no \textit{internal} read in $T$,
			then $\Int(T)$ holds vacuously.
			Otherwise, $\Int(T)$ holds
			because $T$, whether at $\SI$ or $\SER$ level,
			buffers writes in its private $\buffer$
			(\code{\ref{alg:si-ser-fekete}}{\ref{line:writetxn-buffer-si-ser-fekete}})
			and reads from $\buffer$ for internal reads
			(\code{\ref{alg:si-ser-fekete}}{\ref{line:readtxn-buffer-si-ser-fekete}}).
		\item $\Ext(T)$:
			If there is no \textit{external} read in $T$,
			then $\Ext(T)$ holds vacuously.
			Otherwise, consider any external read $r \triangleq \R(x, v)$ in $T$
			for some $x \in \Key$ and $v \in \Val$.
			Let $S \triangleq \max_{\AR}(\VIS^{-1}(T) \cap \WriteTx_{x})$.
			We distinguish two cases depending on whether
			$T$ is at $\SI$ or $\SER$ level,
			and show that $S \vdash \W(x, v)$.
			\begin{itemize}
				\item \casei: $\level(T) = \SI$.
					By definition of $\VIS$ for $\SI$ transactions,
					\[
						\VIS^{-1}(T) = \set{T' \in \T \mid T'.\cts < T.\sts}.
					\]
					By \code{\ref{alg:si-ser-fekete}}{\ref{line:readtxn-si-store-si-ser-fekete}},
					$v$ is the value written to $x$ by the last transaction,
					according to $\AR$, in
					\[
						V(r) \triangleq \set{T' \in \T \mid T'.\cts < T.\sts},
					\]
					Since $V(r) = \VIS^{-1}(T)$, by definition of $S$,
					\[
						S = \max_{\AR}(V(r) \cap \WriteTx_{x}).
					\]
					That is, $S$ is the last transaction in $V(r)$
					that writes to $x$ according to $\AR$.
					Therefore, $S \vdash \W(x, v)$ holds.
				\item \caseii: $\level(T) = \SER$.
					By definition of $\VIS$ for $\SER$ transactions,
					\[
						\VIS^{-1}(T) = \set{T' \in \T \mid T'.\cts < T.\cts}.
					\]
					By \code{\ref{alg:si-ser-fekete}}{\ref{line:readtxn-ser-store-si-ser-fekete}},
					$v$ is the value written to $x$ by the last transaction,
					according to $\AR$, in
					\[
						V(r) \triangleq \set{T' \in \T \mid T'.\cts < r.\now},
					\]
					where $r.\now$ is the time when $r$ is performed
					(\code{\ref{alg:si-ser-fekete}}{\ref{line:readtxn-ser-store-si-ser-fekete}}).
					In the following,	we show that
					\[
						V(r) \cap \WriteTx_{x} = \VIS^{-1}(T) \cap \WriteTx_{x}.
					\]
					On the one hand, since $r.\now < T.\cts$,
					\[
						V(r) \subseteq \VIS^{-1}(T).
					\]
					Hence,
					\[
						V(r) \cap \WriteTx_{x} \subseteq
						\VIS^{-1}(T) \cap \WriteTx_{x}.
					\]
					On the other hand, suppose by contradiction that
					\[
						\VIS^{-1}(T) \cap \WriteTx_{x} \nsubseteq V(r) \cap \WriteTx_{x}.
					\]
					Then there exists a transaction $T' \in \T$ such that
					\[
						T' \in \WriteTx_{x} \land T'.\cts \in (r.\now, T.\cts).
					\]
					By \code{\ref{alg:si-ser-fekete}}{\ref{line:readtxn-ser-slock}},
					$T$ acquires a shared lock on $x$ before $r.\now$.
					By \code{\ref{alg:si-ser-fekete}}{\ref{line:committxn-si-xlock}}
					for $\SI$ transactions or
					\code{\ref{alg:si-ser-fekete}}{\ref{line:writetxn-ser-xlock}}
					for $\SER$ transactions,
					$T'$ acquires an exclusive lock on $x$.
					Furthermore, both $T$ and $T'$ hold these locks until they commits
					(\code{\ref{alg:si-ser-fekete}}{\ref{line:committxn-releaselocks}}),
					particularly after they are assigned commit timestamps
					(\code{\ref{alg:si-ser-fekete}}{\ref{line:committxn-committime-si-ser-fekete}}).
					Since shared and exclusive locks on the same key are incompatible,
					it cannot be that $T'.\cts \in (r.\now, T.\cts)$.
					Therefore,
					\[
						\VIS^{-1}(T) \cap \WriteTx_{x} \subseteq
							V(r) \cap \WriteTx_{x}.
					\]
			\end{itemize}
		\item $\Session(T)$:
			Consider any transaction $T' \in \T$
			such that $T' \rel{\SO} T$.
			By definition of $\SO$, $T$ starts after $T'$ commits.
			\begin{itemize}
				\item \casei: $\level(T) = \SI$.
					Since $T$ starts after $T'$ commits
					and the wall time always increases, we have
					\[
						T'.\cts < T.\sts.
					\]
					By definition of $\VIS$ for $\SI$ transactions,
					\[
						T' \rel{\VIS} T.
					\]
				\item \caseii: $\level(T) = \SER$.
					Since $T$ starts after $T'$ commits,
					$T$ commits after it starts,
					and the wall time always increases, we have
					\[
						T'.\cts < T.\cts.
					\]
					By definition of $\VIS$ for $\SER$ transactions,
					\[
						T' \rel{\VIS} T.
					\]
			\end{itemize}
		\item $\Prefixaxiom(T)$:
			We need to show that for any transactions $T', S \in \T$,
			the following holds:
			\[
				T' \rel{\AR} S \rel{\VIS} T
					\implies T' \rel{\VIS} T.
			\]
			Since $T' \rel{\AR} S$, by definition of $\AR$,
			\[
				T'.\cts < S.\cts.
			\]
			In the following, we consider two cases
			depending on whether $T$ is at $\SI$ or $\SER$ level.
			\begin{itemize}
				\item \casei: $\level(T) = \SI$.
					Since $S \rel{\VIS} T$,
					by definition of $\VIS$ for $\SI$ transactions,
					\[
						S.\cts < T.\sts.
					\]
					Thus,
					\[
						T'.\cts < S.\cts < T.\sts.
					\]
					Therefore, by definition of $\VIS$ for $\SI$ transactions,
					\[
						T' \rel{\VIS} T.
					\]
				\item \caseii: $\level(T) = \SER$.
					Since $S \rel{\VIS} T$,
					by definition of $\VIS$ for $\SER$ transactions,
					\[
						S.\cts < T.\cts.
					\]
					Thus,
					\[
						T'.\cts < S.\cts < T.\cts.
					\]
					Therefore, by definition of $\VIS$ for $\SER$ transactions,
					\[
						T' \rel{\VIS} T.
					\]
			\end{itemize}
		\item $\NoConflict(T)$:
			Consider a transaction $T' \neq T$ such that $T \wwconflict T'$.
			Suppose that $T' \rel{\AR} T$.
			By definition of $\AR$,
			\[
				T'.\cts < T.\cts.
			\]
			In the following, we consider two cases
			depending on whether $T$ is at $\SI$ or $\SER$ level,
			and show that $T' \rel{\VIS} T$.
			\begin{itemize}
				\item \casei: $\level(T) = \SI$.
					If $T.\sts < T'.\cts$, then we have
					\[
						T'.\cts \in (T.\sts, T.\cts) \land T' \wwconflict T.
					\]
					By \code{\ref{alg:si-ser-fekete}}{\ref{line:committxn-write-conflict-checking-si-ser-fekete}},
					$T$ should be aborted.
					Therefore,
					\[
					  T'.\cts < T.\sts.
					\]
					By definition of $\VIS$ for $\SI$ transactions,
					\[
						T' \rel{\VIS} T.
					\]
				\item \caseii: $\level(T) = \SER$.
					By definition of $\VIS$ for $\SER$ transactions,
					\[
						T' \rel{\VIS} T.
					\]
			\end{itemize}
		\item $\TotalVis(T)$:
			Note that in this case, $\level(T) = \SER$.
		  Consider a transaction $T' \in \T$ such that
			\[
				T' \rel{\AR} T.
			\]
			By definition of $\AR$,
			\[
				T'.\cts < T.\cts.
			\]
			By definition of $\VIS$ for $\SER$ transactions,
			\[
				T' \rel{\VIS} T.
			\]
	\end{itemize}
\end{proof}


\begin{example}
	\label{ex:si-ser-fekete-ce}

	We show that if the procedure $\Call{Commit}{T}$ was not executed atomically
	and \code{\ref{alg:si-ser-fekete}}{\ref{line:committxn-committime-si-ser-fekete}}
	was moved before \code{\ref{alg:si-ser-fekete}}{\ref{line:committxn-si-buffer}},
	then Non-RepeatableRead anomaly may arise.

	Consider the scenario where two transactions $T$ and $T'$
	with $\level(T) = \level(T') = \SI$
	operate on the same key $k$ with initial value $v_0$;
	$T$ is a writer while $T'$ is a reader.

	\begin{enumerate}[leftmargin=15pt]
		\item {$T$ executes \textsc{Commit}.}
		  Because the commit procedure assigns $T.\cts$
			before acquiring locks (in the modified algorithm),
			$T$ immediately sets its commit timestamp to $t$.
			The write to $k$ has not been installed into $\store$ yet.
		\item {$T'$ begins after $t$.}
		  $T'$ receives a start timestamp $s'$ with $s' > t$
			and performs its first read of $k$.
			Since $T$'s write has not yet been installed,
			$T'$ reads the old version (value $v_{0}$).
		\item {$T$ acquires $\xlock(k)$ in \textsc{Commit}.}
			$T$ continues its commit procedure,
			acquiring an exclusive lock on $k$
			and installing the new value $v_{t}$ into $\store$
			with its commit time $t$.
		\item {$T'$ reads $k$ again.}
			$T'$ now reads the current version of $k$ and sees $v_{t}$.
			Thus, $T'$ has read two different values ($v_{0}$ and $v_{t}$)
			for the same key in the same transaction,
			violating the repeatable read guarantee.
	\end{enumerate}
\end{example}



\end{document}